\pdfoutput=1
\documentclass[a4paper,12pt,reqno]{article}
\usepackage[markup=underlined]{changes}
\RequirePackage{etex}

\usepackage[T1]{fontenc}
\usepackage{fourier}
\usepackage{newcent}
\usepackage{microtype}
\DeclareMathSizes{12}{13}{8}{6}

\usepackage{etex}

\usepackage{geometry}
\usepackage{setspace}
\usepackage{titlesec}
\usepackage{amsmath,amssymb,amsthm,mathtools}
\usepackage{array,booktabs}
\usepackage{enumitem}
\usepackage[ruled,vlined,linesnumbered,algo2e]{algorithm2e}
\usepackage[table]{xcolor}
\usepackage{colortbl}

\definecolor{payoffgray}{gray}{0.92}

\usepackage{multirow}

\usepackage{bm}
\usepackage{makecell}
\usepackage{tabularx}
\usepackage{MnSymbol}
\expandafter\let\expandafter\IVFEWidehatInternal\csname widehat \endcsname
\expandafter\let\expandafter\IVFEWidetildeInternal\csname widetilde \endcsname
\AtBeginDocument{%
  \expandafter\let\csname widehat \endcsname\IVFEWidehatInternal
  \expandafter\let\csname widetilde \endcsname\IVFEWidetildeInternal
}

\usepackage{natbib}
\usepackage{tikz}
\usepackage[framemethod=TikZ]{mdframed}
\usepackage{changepage}
\usepackage{thm-restate}

\makeatletter
\@ifundefined{newcounteralias}{}{%
  \renewcommand\thmt@autorefsetup{%
    \@xa\def\csname\thmt@envname autorefname\@xa\endcsname\@xa{\thmt@thmname}%
  }%
}
\makeatother

\DeclarePairedDelimiterX{\inp}[2]{\langle}{\rangle}{#1, #2}

\usepackage[hypertexnames=false]{hyperref}
\usepackage[nameinlink,capitalise,noabbrev]{cleveref}

\definecolor{niceMagenta}{HTML}{C2185B}
\definecolor{softIndigo}{HTML}{3949AB}
\definecolor{oceanTeal}{HTML}{00897B}
\definecolor{warmOrange}{HTML}{BF4D00}
\definecolor{slateGray}{HTML}{546E7A}
\definecolor{purple}{HTML}{7B1FA2}
\definecolor{acmPurple}{HTML}{65016B}

\hypersetup{
  colorlinks=true,
  linkcolor=acmPurple,
  citecolor=acmPurple,
  urlcolor=black,
  pdftitle={The Complexity of Information-Value-Free Equilibria and Regulating Algorithmic Collusion},
  pdfsubject={Computational and query complexity of information-value-free equilibria}
}

\allowdisplaybreaks

\theoremstyle{plain}
\newtheorem{theorem}{Theorem}[section]
\newtheorem{lemma}{Lemma}[section]

\newtheorem{proposition}{Proposition}[section]
\newtheorem{corollary}{Corollary}[section]
\newtheorem{informaltheorem}{Main Result}
\newtheorem{informalcorollary}{Corollary}
\crefname{informaltheorem}{Main Result}{Main Results}
\Crefname{informaltheorem}{Main Result}{Main Results}
\crefname{informalcorollary}{Corollary}{Corollaries}
\Crefname{informalcorollary}{Corollary}{Corollaries}
\AtBeginEnvironment{informaltheorem}{\setlength{\emergencystretch}{1.5em}}
\AtBeginEnvironment{informalcorollary}{\setlength{\emergencystretch}{1.5em}}

\theoremstyle{definition}
\newtheorem{definition}{Definition}[section]

\newtheorem{example}{Example}[section]

\theoremstyle{remark}

\definecolor{mainResultBorder}{HTML}{356B82}
\definecolor{mainResultBackground}{HTML}{EEF5F7}
\definecolor{challengeBorder}{HTML}{A45F2A}
\definecolor{challengeBackground}{HTML}{FBF2E8}

\mdfdefinestyle{mainResultStyle}{%
  backgroundcolor=mainResultBackground,
  linecolor=mainResultBorder,
  linewidth=0pt,
  roundcorner=0pt,
  skipabove=8pt,
  skipbelow=8pt,
  innertopmargin=0pt,
  innerbottommargin=7pt,
  innerleftmargin=9pt,
  innerrightmargin=9pt
}

\mdfdefinestyle{challengeStyle}{%
  backgroundcolor=challengeBackground,
  linecolor=challengeBorder,
  linewidth=0pt,
  roundcorner=0pt,
  skipabove=8pt,
  skipbelow=8pt,
  innertopmargin=6pt,
  innerbottommargin=6pt,
  innerleftmargin=9pt,
  innerrightmargin=9pt,
  nobreak=true,
  footnoteinside=false
}

\newcounter{challenge}

\usepackage{mathtools}
\titlespacing*{\paragraph}{0pt}{1.25ex plus 1ex minus .2ex}{0.5em}

\titleformat{\section}
		{\bfseries\center}
         {\thesection}
        {0.5em}
        {}
        []

\titleformat{\subsection}[runin]
        {\normalfont\bfseries}
        {\thesubsection}
        {0.5em}
        {\addperiod}
        []
\newcommand{\addperiod}[1]{#1.}

\newcommand{\E}{\mathbb{E}}
\newcommand{\Prb}{\mathbb{P}}
\newcommand{\Q}{\mathbb{Q}}
\newcommand{\R}{\mathbb{R}}
\newcommand{\Z}{\mathbb{Z}}
\newcommand{\payoffdenom}{B}
\newcommand{\one}{\boldsymbol{1}}
\newcommand{\ind}[1]{\one\!\left\{#1\right\}}
\newcommand{\pos}[1]{\left[#1\right]_{+}}
\newcommand{\supp}{\operatorname{supp}}
\newcommand{\dist}{\operatorname{dist}}
\newcommand{\poly}{\operatorname{poly}}

\newcommand{\defeq}{\coloneqq}
\newcommand{\argmaxop}{\operatorname*{arg\,max}}
\newcommand{\Reg}{\mathrm{Reg}}
\newcommand{\SwapReg}{\mathrm{SwapReg}}
\newcommand{\barreg}{\overline{\Reg}}
\newcommand{\barswapreg}{\overline{\SwapReg}}
\newcommand{\IVFCE}{\mathrm{IVFCE}}
\newcommand{\IVFCCE}{\mathrm{IVFCCE}}

\newcommand{\PPAD}{\mathsf{PPAD}}
\newcommand{\PLS}{\mathsf{PLS}}
\newcommand{\CLS}{\mathsf{CLS}}

\let\oldcitet\citet
\let\oldcitep\citep
\renewcommand{\citet}[1]{\oldcitet*{#1}}
\renewcommand{\citep}[1]{\oldcitep*{#1}}

\title{\Large\scshape\bfseries
Algorithmic Collusion and the Complexity\\
of Information-Value-Free Equilibria}
\date{}

\author{
\makebox[.27\linewidth]{Ioannis Anagnostides\thanks{Carnegie Mellon University; email: \protect\texttt{ianagnos@cmu.edu}}}\\
CMU
\and
\makebox[.27\linewidth]{Weiqiang Zheng\thanks{Yale University; email: \protect\texttt{weiqiang.zheng@yale.edu}}}\\
Yale
}

\begin{document}
\pagenumbering{roman}
\pagestyle{empty}
\maketitle

\begin{abstract}
A (coarse) correlated equilibrium (CE) is \emph{information-value-free (IVF)} if a player can match the payoff obtained from recommendations by committing to a fixed action. Motivated by the problem of regulating algorithmic collusion, this refinement was introduced by Hartline, Wang, and Zhang [EC'26], who showed that it can be computed in polynomial time in explicitly represented normal-form game. In this paper, we examine the complexity of IVF(C)CEs in succinct games, which model more realistic strategic interactions that feature either many players or exponentially many pure strategies.

We first show that computing an information-value-free CE is PPAD-complete---that is, polynomial-time equivalent to Nash equilibrium---in many-player polymatrix games or two-player Bayesian games, even when the approximation is a constant. We also prove an unconditional exponential query lower bound. Our results establish that IVFCEs are intractable, \emph{even in the centralized model}, and rule out the existence of any efficient learning dynamics. This significantly strengthens the impossibility result of Hartline, Wang, and Zhang, which concerns a particular class of learning algorithms, and furnishes strong computational critiques of recent regulation on algorithmic collusion.

To sidestep these hardness results, we examine the complexity of information-value-free CCE. Certain no-regret algorithms---such as regret matching or FTRL---provide a fully polynomial-time approximation scheme (FPTAS) for this problem. The complexity when the approximation is exponentially small turns out to be nuanced. On the one hand, leveraging no-regret dynamics, we establish membership in \(\text{CLS} = \text{PPAD} \cap \text{PLS} \). On the other hand, we show that it is at least as hard as the P-matrix linear complementarity problem, and hence as hard as simple stochastic games. This shows that even IVFCCEs are unlikely to admit a polynomial-time algorithm barring a major breakthrough.
\end{abstract}
\thispagestyle{empty}

\clearpage

\setcounter{tocdepth}{2}
\tableofcontents
\thispagestyle{empty}

\clearpage

\pagenumbering{arabic}
\pagestyle{plain}

\section{Introduction}

Firms increasingly delegate pricing decisions to sophisticated algorithms that adjust to market conditions using data on sales and competitors' prices. Algorithmic pricing promises to enable faster and more effective responses to fluctuations in supply and demand. However, there is growing evidence that it can have the concomitant effect of increasing firms' margins, potentially harming consumers~\citep{Assad24:Algorithmic}. These findings reinforce concerns about \emph{algorithmic collusion}, whereby interacting pricing algorithms can sustain supracompetitive prices without explicit communication or an overt agreement to collude~\citep{Calvano20:Artificial}. This type of autonomous collusion falls outside the purview of traditional antitrust prohibitions on anticompetitive practices, making the development of effective regulatory approaches an important challenge~\citep{Harrington18:Developing,Chassang23:Regulating}.

A proposed remedy is to require from pricing algorithms to be \emph{rationalizable}, in the sense that they incur vanishing \emph{swap regret}, a condition that regulators can verify \emph{ex post}~\citep{Hartline24:Regulation,Hartline25:Regulation}. Swap regret is a classic notion in online learning ~\citep{Stoltz05:Internal,Blum07:External}, inextricably linked to \emph{correlated equilibrium}, a seminal game-theoretic solution concept~\citep{Aumann74:Subjectivity}. In particular, when every player employs a no-swap-regret algorithm, the empirical distribution of play approaches the set of correlated equilibria~\citep{Hart00:Simple,Foster97:Calibrated}. A correlated equilibrium can be modeled through a mediator, who privately recommends actions to the players so that no player can profitably deviate. A recommendation thus often provides a \emph{valuable signal}, revealing information about what the other players are likely to do.

The role of information in algorithmic pricing has recently attracted regulatory scrutiny. For example, the~\citet{DOJ25:RealPage} prescribes that ``firms should
not make pricing decisions using insight drawn from their
competitors’ nonpublic, competitively sensitive data.'' This raises the broader question of whether learning algorithms can reach correlated equilibria without \emph{generating or exchanging valuable information}. To formalize this, \citet{Hartline26:Computable} recently introduced the notion of \emph{information-value-freeness}. A correlated equilibrium is information-value-free when every player can match the equilibrium payoff by committing to a fixed action regardless of the recommendation. Thus, although the recommendation may contain information about other players' actions, it has no value to the player. The main focus of our paper is on information-value-free equilibria.

\paragraph{Model.} More concretely, we consider a finite normal-form game with a set of players \([n]\), finite action sets \(\mathcal A_i\), and utilities \(u_i:\mathcal A\to[0,1]\), where \(\mathcal A=\prod_{i=1}^n \mathcal A_i\) is the joint action set. For a (potentially correlated) distribution \(\mu\in\Delta(\mathcal A)\) over action profiles, player \(i\)'s average \emph{swap regret} is
\begin{align*}
 \barswapreg_i(\mu)
 \defeq
 \max_{\phi_i:\mathcal A_i\to\mathcal A_i}
 \E_{\boldsymbol a\sim\mu}
 \bigl[u_i(\phi_i(a_i),\boldsymbol a_{-i})-u_i(\boldsymbol a)\bigr].
\end{align*}
A distribution \(\mu\) is an \emph{\(\epsilon\)-correlated equilibrium (CE)} if \(\barswapreg_i(\mu)\le\epsilon\) for every player $i \in [n]$. Following~\citet{Hartline26:Computable}, we define the \emph{information-value} of player \(i\) under \(\mu\) by
\begin{equation}
 \operatorname{IV}_i(\mu)
 \defeq
 \pos{-\barreg_i(\mu)}
 =
 \pos{\E_{\boldsymbol{ a} \sim \mu}[u_i(\boldsymbol a)]
       -\max_{a_i'\in\mathcal A_i}
        \E_\mu[u_i(a_i',\boldsymbol a_{-i})]},
 \label{eq:iv-violation}
\end{equation}
where $\barreg_i(\mu)$ is the average \emph{external regret}. Equivalently, \(\operatorname{IV}_i(\mu)\le\epsilon\) iff \(\barreg_i(\mu)\ge-\epsilon\). We call \(\mu\) \emph{\(\epsilon\)-information-value-free} if \(\barreg_i(\mu)\ge-\epsilon\) for every player $i \in [n]$.

\begin{definition}[Information-value-free correlated equilibrium; \citealp{Hartline26:Computable}]
\label{def:ivf-equilibria}
A distribution \(\mu\) is an \emph{\((\epsilon,\delta)\)-IVFCE} if it is both an \(\epsilon\)-CE and \(\delta\)-information-value-free. We abbreviate \((\epsilon,\epsilon)\)-\(\IVFCE\) as \(\epsilon\)-\(\IVFCE\).
\end{definition}

An exact information-value-free CE always exists in finite games: every mixed Nash equilibrium~\citep{Nash51:Noncooperative} induces a product CE, which is indeed information-value-free. We provide some illustrative examples later in~\Cref{sec:preliminaries}.

\citet{Hartline26:Computable} observed that information-value-free correlated equilibrium can be computed in polynomial time in explicitly represented normal-form games.\footnote{\citet{Hartline26:Computable} pointed this out for two-player normal-form games, but it readily generalizes to all explicitly represented normal-form games with complexity \(O(m^n)\), where \(m\) is the maximum number of actions per player; this is polynomial in the explicit representation, which requires specifying \(m^n\) payoff entries for each player.} This can be done by enumerating the players' candidate best fixed actions and solving the corresponding CE linear programs. In stark contrast, they showed that it cannot be learned by a broad class of learning algorithms. More specifically, a no-swap-regret algorithm cannot in general make the empirical play converge to an information-value-free CE against a \emph{smooth learner}, who responds gradually to a sufficiently small payoff gap. Standard online algorithms such as multiplicative weights update, follow-the-perturbed-leader, online gradient ascent, and their Blum-Mansour reductions~\citep{Blum07:External} satisfy this property~\citep{Hartline26:Computable}. An implication of this impossibility result is that regulations that prohibit algorithms from creating or exchanging valuable information may also proscribe rationalizable learning, rather than restricting collusion. However, \citet{Hartline26:Computable} left open the possibility that non-smooth learners may converge to information-value-free CE.
\begin{quote}
    \centering
    \emph{Are there efficient learning dynamics that converge in polynomially many rounds to an information-value-free CE?}
\end{quote}

To shed light on this question, we examine the complexity of information-value-free CE beyond explicitly represented normal-form games, in \emph{succinct games}. Indeed, the LP algorithm noted by~\citet{Hartline26:Computable} scales \emph{exponentially} with the number of players. Yet, in succinct games, a correlated equilibrium can be computed in polynomial time through the famous ellipsoid against hope algorithm~\citep{Papadimitriou08:Computing}, and efficiently learned through online learning dynamics~\citep{Foster97:Calibrated,Hart00:Simple,CesaBianchi06:Prediction}. Succinct games model more realistic strategic interactions that feature either many players or exponentially many pure strategies. A canonical example is \emph{Bayesian games}, in which a player has to condition its action based on some observed signal. This raises the question: \emph{what is the complexity of information-value-free CE?} In particular, we are interested in how the additional requirement of information-value-freeness affects the complexity of CE. Besides the implications for algorithmic collusion, information-value-freeness is a natural concept that has been used in other settings, such as mechanism design (\Cref{sec:related}), and understanding its complexity is an important question.

\subsection{Our results}

We characterize the complexity of information-value-free equilibria. Taken together, our results provide further critique for regulating against learning algorithms that generate valuable information.

\subsubsection{Information-value-free correlated equilibrium}

We first show that, surprisingly, computing an information-value-free correlated equilibrium in succinct games is as hard as computing a Nash equilibrium~\citep{Daskalakis09:Complexity,chen09:Settling,Rubinstein15:Inapproximability}.

\begin{restatable}{informaltheorem}{mainppad}
\label{informal:ivfce-complexity}
There are universal constants \(\epsilon > 0 \) and \(m\) such that computing an \(\epsilon\)-IVFCE is \(\PPAD\)-hard even in bipartite polymatrix games of maximum degree three with at most \(m\) actions per player.
\end{restatable}

Assuming \(\PPAD\ne\mathsf{P}\), this rules out the existence of \emph{any} computationally efficient learning dynamics approaching the set of IVFCEs. It also precludes the existence of a \emph{polynomial-time approximation scheme (PTAS)}.\footnote{For a fixed $\epsilon > 0$, a PTAS computes an \(\epsilon\)-approximate solution in time polynomial in the problem's input size.}

The same limitation applies even in \emph{two-player Bayesian games}, where each player can condition actions based on an observed type.\footnote{An agent-normal-form correlated equilibrium is a correlated equilibrium of the game obtained by treating each player-type as a separate player, who observes only the recommendation associated with that type~\citep{Forges93:Five}.}

\begin{informalcorollary}
\label{informal:ivfce-bayesian}
Computing an \(\epsilon\)-information-value-free CE of the agent normal form of a two-player Bayesian game is \(\PPAD\)-hard, even when the game has polynomially many types and at most \(m\) actions per type.
\end{informalcorollary}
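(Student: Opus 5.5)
The plan is to reduce from \cref{informal:ivfce-complexity} through the standard correspondence between bipartite polymatrix games and two-player Bayesian games in agent normal form. Take the hard instance of \cref{informal:ivfce-complexity}: a bipartite polymatrix game with player sets \(L\) and \(R\), maximum degree three, and at most \(m\) actions per player. Build a two-player Bayesian game in which player 1's types are the nodes of \(L\) and player 2's types are the nodes of \(R\). The type prior is uniform over the edges \((\ell,r)\) of the bipartite graph; one can pad with dummy edges so that every node has degree exactly three, so the marginals are uniform. When the realized type pair is \((\ell,r)\), the type-\(\ell\) agent and the type-\(r\) agent play the bimatrix game on that edge, with payoffs normalized to \([0,1]\). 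Each type observes only its own type.

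The central computation concerns agent \(\ell\) in the agent normal form. Its expected payoff, conditioned on being type \(\ell\) (probability \(\deg(\ell)/|E| = 3/|E|\)), is \(\frac13\sum_{r\sim \ell} u_{\ell r}(a_\ell,a_r)\). This is exactly \(\frac13\) times its polymatrix utility. The two games are therefore the same game with player set \(L\cup R\), up to a positive per-player rescaling. The rescaling is \(\frac13\) when utilities are conditioned on the type, or \(\frac{3}{|E|}\) when they are taken ex ante.

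A distribution \(\mu\) over joint action profiles of all agents is then an object of the same kind in both games. Swap regret and external regret scale linearly with the player's utility. So an \(\epsilon\)-IVFCE of the agent normal form, measured with type-conditional utilities, is a \(3\epsilon\)-IVFCE of the polymatrix game. Since \(\epsilon\) in \cref{informal:ivfce-complexity} is a universal constant, \(\epsilon/3\) is also a constant, and PPAD-hardness transfers. The construction has polynomially many types and at most \(m\) actions per type, and it clearly runs in polynomial time.

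The main point to check is the normalization convention for approximation in Bayesian games. If regrets are measured ex ante rather than per type, the rescaling \(3/|E|\) would turn a constant \(\epsilon\) into an inverse-polynomial one. In that case the statement should be read as hardness for \(\epsilon/\poly\) accuracy, or stated directly with per-type normalization. I would fix the per-type convention, which is standard for the agent normal form since each agent is a separate player. I would also verify that the padding with dummy edges, which use zero payoffs, preserves both the CE condition and the IV condition: it adds only constants to the utilities, and those constants cancel in both regrets.
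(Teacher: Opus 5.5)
Your proposal is correct and follows the paper's argument: the two sides of the bipartite graph become the two players' type sets, and the type profile is a uniformly random edge. The dummy-edge padding, and the factor of $3$ it costs, is unnecessary: under the uniform edge prior each type's conditional distribution over opponent types is uniform on its neighbors, which reproduces exactly the $1/|N_i|$ normalization of the polymatrix utilities, so with per-type utilities the agent normal form is literally the game $\mathcal H$ from \cref{informal:ivfce-complexity}.
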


This follows from~\Cref{informal:ivfce-complexity} by interpreting the vertices on each side of the bipartite polymatrix graph as the types of one Bayesian player, and drawing an edge uniformly as the type profile.

Moreover, we establish an \emph{unconditional} exponential query lower bound for computing an $\epsilon$-IVFCE even when $\epsilon$ is a constant.

\begin{restatable}{informaltheorem}{mainquery}
\label{informal:ivfce-query}
For universal constants \(\epsilon>0\) and \(m\), every randomized algorithm that computes an \(\epsilon\)-information-value-free CE of an \(n\)-player, \(m\)-action game with probability at least \(2/3\) requires \(2^{\Omega(n)}\) payoff queries in the worst case.
\end{restatable}

\Cref{informal:ivfce-query} complements the impossibility result by~\citet{Hartline26:Computable}, who showed that \emph{smooth learners} cannot converge to IVFCE. Our result shows that \emph{any} learning algorithms---with potentially unbounded computation and centralized coordination---require exponentially many iterations to converge to a constant-precision IVFCE.



\subsubsection{Information-value-free coarse correlated equilibrium}

These hardness results show that IVFCEs cannot be attained in the worst case. This raises the question of whether there are meaningful relaxations of CEs that are compatible with information-value-freeness without sacrificing polynomial-time computation. To address this, we consider the relaxed notion of \emph{coarse} correlated equilibrium (CCE)~\citep{Moulin78:Strategically}. We recall that a distribution \(\mu\) is an \emph{\(\epsilon\)-CCE} if \(\barreg_i(\mu)\le\epsilon\) for every player. Ae before, we say that \(\mu\) is \(\epsilon\)-information-value-free if \(\barreg_i(\mu)\ge-\epsilon\) for every player, following the same convention as in~\Cref{def:ivf-equilibria}. That is, an IVFCCE prescribes $\barreg_i(\mu) \approx 0$.

In contrast to IVFCEs, there are no-external-regret algorithms that guarantee nonnegative regret. In particular, such dynamics yield a fully polynomial-time approximation scheme (FPTAS) for \((\epsilon, 0) \)-IVFCCE (\Cref{thm:ivfcce-fptas}). 

This in fact follows from existing results in the online learning literature (\emph{e.g.},~\citealp{Guzman21:Best}), who examined this question with a different motivation in mind. This is in contrast to IVFCE, for which there is no PTAS unless $\PPAD  = \mathsf{P}$ (\Cref{informal:ivfce-complexity}) and it is information-theoretically impossible to simultaneously guarantee sublinear swap regret and nonnegative external regret~\citep{Hartline26:Computable}. More specifically, we give two examples. The first is \emph{regret matching}~\citep{Hart00:Simple}, which maintains nonnegative external regret (under a specific tie-breaking rule). The second is the class of \emph{follow-the-regularized-leader (FTRL)} algorithms, which has been shown to have nonnegative (or sublinearly negative when the regularizer is adaptive) external regret~\citep{Guzman21:Best}. 

On top of this, FTRL algorithms are also \emph{mean-based}~\citep{braverman2018selling}, which implies that they iteratively eliminate dominated actions. In fact, a recent work by~\citet{Nisan26:Domination} argues that \emph{dominance-avoiding} algorithms like FTRL are---in some sense---non-collusive, converging to competitive prices in Bertrand duopoly games. This viewpoint is supported by~\citet{Durmann26:Online}. Therefore, FTRL is a class of algorithms that can guarantee simultaneously no external regret, information-value-freeness, and dominance-avoidance.

\definecolor{tableblue}{HTML}{1F4E79}
\definecolor{headergray}{HTML}{F1F2F4}
\definecolor{ivfblue}{HTML}{EAF3F8}

\renewcommand{\tabularxcolumn}[1]{m{#1}}

\begin{table}[t]
    \centering
    \caption{Computational complexity of equilibrium notions in
    normal-form games: Nash equilibrium (NE), correlated equilibrium (CE), coarse correlated equilibrium (CCE), information-value-free CE (IVFCE), and information-value-free CCE (IVFCCE). The results of this paper are highlighted in the shaded cells.}
    \label{tab:equilibrium-complexity}

    \renewcommand{\arraystretch}{1.15}
    \setlength{\tabcolsep}{6pt}

    \begin{tabularx}{\textwidth}{
        @{}
        >{\raggedright\arraybackslash}m{0.16\textwidth}
        >{\centering\arraybackslash}m{0.35\textwidth}
        >{\centering\arraybackslash}X
        @{}
    }
        \arrayrulecolor{black}
        \toprule

        \textcolor{black}{Equilibrium}
        &
        \textcolor{black}{Explicit normal form}
        &
        \textcolor{black}{Succinct normal form}
        \\

        \midrule

        NE
        &
        \shortstack{
            $\PPAD$-complete \\
            for $1/\poly$ approximation\\
            \citep{chen09:Settling}
        }
        &
        \shortstack{
            $\PPAD$-complete\\
            for constant approximation\\
            \citep{Rubinstein15:Inapproximability}
        }
        \\

        \arrayrulecolor{black}
        \midrule

       CE
        &
        &
        \\

        CCE
        &
        \multirow{-2}{*}{\shortstack{$\mathsf{P}$ \\ \citep{Aumann87:Correlated}} }
        &
        \multirow{-2}{*}{\shortstack{$\mathsf{P}$\\ \small \citep{Papadimitriou08:Computing} }}
        \\

        \midrule

        IVFCE
        &  \shortstack{$\mathsf{P}$\\ \footnotesize \citep{Hartline26:Computable} }
        &
        \cellcolor{ivfblue}
        \shortstack{
            $\PPAD$-complete\\
            for constant approximation
        }
        \\
       
        \midrule

        IVFCCE
        & \shortstack{$\mathsf{P}$\\ \footnotesize \citep{Hartline26:Computable} }
        &
        \cellcolor{ivfblue}
        \makecell[c]{FPTAS\\
            $\CLS$ membership\\
            $P$-LCP-hard}
        \\
        \arrayrulecolor{black}
        \bottomrule
    \end{tabularx}
\end{table}


\paragraph{Complexity beyond FPTAS.}
We then consider the complexity of $\epsilon$-IVFCCE in the high-accuracy regime, where the precision $\epsilon$ is exponentially small and we seek an algorithm that runs in $\poly(\log(1/\epsilon))$ time. This regime turns out to be surprisingly nuanced. On one hand, we show that the problem lies in $\CLS = \PPAD \cap \PLS$~\citep{Daskalakis11:Continuous,Fearnley23:Complexity}, a subclass of $\PPAD$ that captures the complexity of KKT points. As a consequence, this separates (barring complexity collapses) IVFCEs from IVFCCEs, even in the high-accuracy regime. On the other hand, we show that computing an IVFCCE is at least as hard as the  \emph{\(P\)-matrix linear complementarity problem}~\citep{Megiddo88:Note,Cottle09:Linear}. We summarize these key results concerning IFVCCEs below.

\begin{informaltheorem}
\label{informal:ivfcce-binary}
Computing an $\epsilon$-IVFCCE belongs to $\CLS$, but is at least as hard as the P-matrix  LCP problem.
\end{informaltheorem}

In particular, the \(P\)-matrix LCP problem subsumes \emph{simple stochastic games}~\citep{Gartner05:Simple}, and has eluded polynomial-time algorithms for decades~\citep{Fearnley20:Unique}. This provides evidence that computing even IVFCCEs in succinct games is unlikely to admit a polynomial-time algorithm. Characterizing the precise complexity of IVFCCEs remains a challenging open problem for future research; we conjecture that it is $\CLS$-complete.\footnote{GPT 6.0 Astra at maximum effort level was unable to make progress on this problem at the time of the writing of this paper.}


\subsection{Technical overview}

We now highlight the key technical insights behind our results, starting from IVFCEs. In our reductions, we embed a quadratic score in players' utilities. At a high-level, information-value-freeness means that the recommendations contain no information relative to the prior. At the same time, the recommendation is calibrated and close to the posterior because of the CE constraint. These two properties together imply that the posterior is close to the prior, which in turn implies that the recommendation itself has small variance. This creates an \emph{equilibrium collapse}: the marginals of any approximate IVFCE yields an approximate Nash equilibrium.

We now explain the argument in more detail. The construction is based on the \emph{imitation game}, which is commonly used in the complexity of equilibria~\citep{Mclennan10:Imitation,Mclennan10:Simple,Babichenko16:Query,Roughgarden16:Communication,Babichenko17:Communication,Etessami20:Tarski,Babichenko21:Settling,Goos21:Near}. In particular, the (normal-form) correlated equilibrium of this imitation game were recently analyzed by~\citet{Anagnostides26:Complexity}, who established contraction-hardness. Instead, here we show $\PPAD$-hardness even for a constant precision under the additional constraint of information-value-freeness.

In what follows, for the sake of exposition, we will make certain innocuous simplifications: i) the analysis will be carried out in one dimension, and ii) players will have continuous action spaces. The formal argument without these restrictions appears in~\Cref{sec:ivfce}.

In this context, let $u(x, y) = 1 - (x - y)^2$ be the utility function of the imitating player, where $x, y \in [0, 1]$. The goal of this player is to pick $x \approx y$. The other player has utility $1 - (y - f(x))^2$ and tries to pick $y \approx f(x)$ where $f: [0,1] \rightarrow [0,1]$ is a continuous function. Thus, a Nash equilibrium $(x^*, y^*)$ would give a fixed point $x^* = y^* = f(x^*)$; for high-dimensional $f$, this is known to be $\PPAD$-complete.

Now, let $\mu$ be a correlated distribution. By definition, we have
\begin{equation}
    \label{eq:CE-constraint}
    \E_{(x, y) \sim \mu} [(x-y)^2] - \E_{(x, y) \sim \mu} \left[ \min_{x' \in [0, 1]} (x' - y)^2 \mid x \right] \leq \barswapreg(\mu).
\end{equation}
In words, upon observing the recommendation $x$, that player will play the point $x'$ that minimizes the distance from $y$ in expectation. In particular, if $\mu(x)$ is the expectation of $y$ conditional on $x$, we have $\E_{(x, y) \sim \mu} [ \min_{x' \in [0, 1]} (x' - y)^2 \mid x ] = \E_{(x, y) \sim \mu} ( \mu(x) - y )^2 $. From this it follows that
\begin{equation}
    \tag{Calibration}
    \label{eq:calibration}
    \E_{(x, y) \sim \mu} [ (x - \mu(x) )^2 ] \leq \barswapreg(\mu),
\end{equation}
because
\[
    \E_{(x, y) \sim \mu} [ (x - \mu(x)) (y - \mu(x) ) ] = \E_{(x,y)} [ x - \mu(x)] \E_{(x, y) \sim \mu} [ y - \mu(x) \mid x ] = 0.
\]
Next, information-value-freeness implies
\[
    \operatorname{IV}(\mu) \geq \min_{x' \in [0, 1]} \E_{(x, y) \sim \mu} [ (x' - y)^2 ] - \E_{(x, y) \sim \mu} [(x-y)^2].
\]
Now, without observing the recommendation $x$, the best response is $\E[y]$; that is, $\min_{x' \in [0, 1]} \E_{(x, y) \sim \mu} [ (x' - y)^2 ] = \E_{(x,y) \sim \mu} [ (\E[y] - y)^2 ]$. Combining with~\eqref{eq:CE-constraint},
\[
    \E_{(x,y) \sim \mu} [ (\E[y] - y)^2 ] - \E_{(x,y) \sim \mu} [ (y - \mu(x))^2] \leq \operatorname{IV}(\mu) + \barswapreg(\mu),
\]
so
\begin{equation}
    \label{eq:IVFness}
    \E_{(x, y) \sim \mu} [ (\mu(x) - \E[y] )^2 ] \leq \operatorname{IV}(\mu) + \barswapreg(\mu).
\end{equation}
We will now argue that the variance of $x$, drawn from $\mu$, is small. Indeed, applying the AM-GM inequality,
\begin{align*}
    \E[ (x - \E[y])^2] &\leq 2 \E[ ( x - \mu(x))^2 ] + 2 \E[ ( \mu(x) - \E[y] )^2 ] \\
    &= 2 \E[ ( x - \mu(x))^2 ] + 2 \operatorname{Var}(\mu(x)),
\end{align*}
where the equality uses the fact that $\E[\mu(x)] = \E[y]$. Combining~\eqref{eq:calibration} and~\eqref{eq:IVFness}, it follows that the variance of $x$ is small and $\E[x] \approx \E[y]$, because $\E[ (x - \E[y])^2 ] = \operatorname{Var}(x) + ( \E[x] - \E[y] )^2 $. This implies that $\mu$ is essentially a product distribution, so one can readily extract a Nash equilibrium through its marginals.

Based on this analysis, \Cref{informal:ivfce-query} follows using existing exponential (black-box) lower bounds for Brouwer fixed points~\citep{Hirsch89:Exponential}, particularly the strengthening of~\citet{Rubinstein16:Settling}. For~\Cref{informal:ivfce-complexity}, the approach is similar, but the reduction starts from a polymatrix game~\citep{Rubinstein15:Inapproximability,Deligkas24:Pure}.

We next turn to the complexity of IVFCCEs when the precision is exponentially small.

\paragraph{Membership in polynomial local search.}

Our $\PLS$ membership proof relies on running suitable no-regret dynamics, which we previously saw yields an FPTAS. This is conceptually interesting, as we have not seen before applying no-regret dynamics to establish membership in $\PLS$. The first challenge is that the correlated equilibrium induced by no-regret dynamics has description growing linearly with the number of iterations, which is polynomial in $1/\epsilon$; when $\epsilon$ is exponentially small, this becomes prohibitive because the output of a $\PLS$ proof needs to have a polynomial description. We address this through a compression lemma (\Cref{lem:rational-compression}), which uses a constructive version of Carath\'eodory's theorem. Moreover, it is important to purify the mixed strategies prescribed by regret matching; otherwise, without the purification, the bit complexity would grow with each iteration. It is tempting here to simply sample from each mixed strategy. However, we need to make sure that the transition is deterministic in order to establish membership in $\PLS$. As formalized in~\Cref{lem:zero-drift-selection}, there is a simple way to deterministically pick pure strategies without affecting the regret upper bound.

Next, to establish the potential property, we attach a counter in the state space, indicating the time step of the dynamics. This step is delicate: simply adding a counter to an arbitrary search algorithm does not, of course, suffice, because checking the validity of a configuration could require backtracking over exponentially many steps in history. What makes our construction work is that the updates preserve directly checkable regret bounds (encoded explicitly in the state space), guaranteeing an IVFCCE at the final round for every valid state.

\paragraph{Hardness from \(P\)-matrix complementarity.}

We next turn to the hardness result. To put this context, we recall that there is a well-known equivalence between LCPs and Nash equilibria (\emph{e.g.},~\citealp{Mehta14:Constant}). Our reduction crucially hinges on having a \emph{$P$-matrix}, which means that every principal minor is positive. As usual in such reductions, each joint action profile in the constructed game induces a candidate \emph{basis} for the LCP. The key difference in our setting is that when considering a correlated distribution, we have a \emph{distribution} over candidate bases. The crucial role of the $P$-matrix property is that it allows averaging over bases \emph{via a consistent sign}; otherwise, these factors could cancel out, preventing recovery of a solution.

\subsection{Additional related literature}
\label{sec:related}

The correlated equilibrium was famously introduced by \citet{Aumann74:Subjectivity}. Our reduction makes use of a standard connection between CE and calibration, going back to~\citet{Foster97:Calibrated}, and further cultivated in recent years~\citep{Fishelson25:Full,Fishelson26:High,Casacuberta24:Complexity,Farina26:Efficient,Peng25:High}. From an algorithmic standpoint, a correlated equilibrium is a linear program, so it can be computed efficiently in polynomial time in explicitly represented games. In fact, \citet{Papadimitriou08:Computing} developed a framework for computing correlated equilibria even in succinct games of polynomial type, meaning that the number of actions available to each player is polynomial in the description. Subsequently, \citet{Jiang15:Polynomial} gave an exact algorithm for this problem. Our results concern the refinement resulting by information-value-freeness. Unlike ordinary correlated equilibria, which have a polynomial query complexity~\citep{Hart18:Query}, we show that IVFCEs share the same complexity fate as Nash equilibria.

\paragraph{Algorithmic collusion.} There has been a surge of research on algorithmic collusion, starting from~\citet{Calvano20:Artificial}. We mention here some (highly incomplete) highlights, and refer to citations therein for a more complete discussion. \citet{Banchio22:Artificial} develop a theoretical mechanism for explaining the emergence of algorithmic collusion, in which statistical linkage between the algorithms can produce coordinated behavior above static Nash equilibria. \citet{Arunachaleswaran25:Algorithmic} show that a seller committing to a no-regret algorithm can induce high prices when another seller optimizes against it, even without explicit punishment threats. Of particular relevance to us, \citet{Hansen21:Frontiers} show how \emph{correlated price} experimentation by independent algorithms can sustain high prices. Collusion can also arise in sequential pricing under Q-learning~\citep{Klein21:Autonomous} and has been observed with large language models~\citep{Fish24:Algorithmic}. We refer to the recent survey of~\citet{Hartline26:Economics} for further pointers and connections with no-regret learning.

\paragraph{Other uses of nonnegative regret.} \citet{Camara20:Mechanisms} (and subsequently~\citet{Collina24:Efficient}) formalize “uninformedness” by assuming that the agent’s external regret is non-negative. The fact that multiplicative weights guarantees nonnegative regret was observed, for example, by~\citet{Blum18:Preserving}, who leveraged that property in the context of multigroup fairness. A characterization of online algorithms with nonnegative regret was carried out earlier by~\citet{Gofer16:Lower}. This is contrast to certain incarnations of \emph{optimistic} mirror descent, which have the propensity to generate negative regret (\emph{e.g.},~\citealp{Hsieh21:Adaptive,Anagnostides22:Optimistic}).

\paragraph{More broadly about the value of information.} From a broader context, the role of information has been a mainstay in economics, and has more recently attracted much attention in the computer science literature in the context of information design. For some recent pointers, we refer to~\citet{Dughmi16:Algorithmic,Dughmi19:Persuasion,Celli20:Private,Castiglioni20:Online,Su2021:Bayesian,Yang24:Computational,Babichenko25:Information,Chen25:Explainable,Duetting25:Information,Ding23:Competitive,Berker26:Earn}.
\section{Additional preliminaries}
\label{sec:preliminaries}

In this section, we provide additional background and complexity conventions.

\subsection{Input and output representations}

In terms of representation, for the computational complexity results, we assume that the underlying game is given succinctly~\citep{Papadimitriou08:Computing}. In terms of the output, following~\citet{Papadimitriou08:Computing}, we assume that a correlated distribution should be given explicitly as a polynomial mixture of product distributions. It is worth pointing out that the IVFCE results do not require an explicit output representation, only a way to extract the marginals of the underlying distribution.

\paragraph{Polynomial expectation property.} Throughout this paper, we assume that the succinct game satisfies the \emph{polynomial expectation property}, which suffices for computing a CE in polynomial time~\citep{Papadimitriou08:Computing}. More specifically, let $L$ be the input length. The polynomial expectation property asserts that, given rational strategies \(\boldsymbol x_j\in\Delta(\mathcal A_j)\) and an action \(a_i'\in\mathcal A_i\), the expectation $\E_{\boldsymbol a_{-i}\sim\bigotimes_{j\ne i}\boldsymbol x_j}
 \bigl[u_i(a_i',\boldsymbol a_{-i})\bigr]$ can be computed exactly in polynomial time. In~\Cref{sec:CLS}, we denote by \(\payoffdenom \ge 1\) an integer encoded with \(\poly(L)\) bits such that \(u_i(\boldsymbol a')\in \payoffdenom^{-1} \Z \cap[0,1]\) for every player \(i\) and action profile \(\boldsymbol a'\).
 
\paragraph{Polymatrix games.} \emph{Polymatrix games}~\citep{Cai16:Zero}, and more broadly, bounded-degree graphical games, satisfy the polynomial expectation property. In particular, a polymatrix game is specified by an undirected graph \(([n],E)\).  We will use the shorthand notation \(m_i=|\mathcal A_i|\). For every adjacent pair \((i,j)\), the matrix \(\mathbf M_{ij}\in[0,1]^{m_i\times m_j}\) specifies player \(i\)'s payoff from the interaction with player \(j\).  If \(\boldsymbol e_{a_i}\) is the unit vector associated with action \(a_i\), then, writing \(N_i\) for the neighbors of \(i\), the normalized utility is given by
\begin{equation*}
 u_i(\boldsymbol a)
 =
 \frac1{|N_i|}
 \sum_{j\in N_i}
 \boldsymbol e_{a_i}^{\top}
 \mathbf M_{ij}
 \boldsymbol e_{a_j}.
\end{equation*}
We may assume that there are no isolated players.

Finally, in the (black-box) \emph{query model}, a query specifies a pure profile and returns every player's payoff at that profile.  The algorithm may be computationally unbounded between queries. A lower bound in this model is unconditional.

\subsection{Non-convexity of information-value-free CE and CCE}

To clarify the notion of an IVF(C)CE, we next provide an illustrative example; this will also show that the set of these equilibria can be nonconvex.

\begin{example}
    We consider the following game \(\mathcal{G}\), where \(\mathcal A_1 = \{\mathsf{U},\mathsf{D}\}\) and
\(\mathcal A_2=\{\mathsf{L},\mathsf{R}\}\):
\[
\renewcommand{\arraystretch}{1.2}
\begin{array}{c|cc}
             & \mathsf{L} & \mathsf{R} \\ \hline
\mathsf{U}   & \cellcolor{payoffgray}(1,1) & (0,0) \\
\mathsf{D}   & (0,0) & \cellcolor{payoffgray}(1,1)
\end{array}
\]
Let \(\mu\) be the point mass on \((\mathsf{U},\mathsf{L})\), and let \(\mu'\) be the point mass on \((\mathsf{D},\mathsf{R})\). Both profiles are pure Nash equilibria and thus
 information-value-free CE and CCE.

The distribution \(\frac12\mu+\frac12\mu'\) is a CE since the set of CE is convex. It is
therefore also a CCE. However, for every
player $i$,
\[
\E_{\boldsymbol a\sim \frac12\mu+\frac12\mu'}[u_i(\boldsymbol a)]= 1
\text{ and }
\E_{\boldsymbol a\sim \frac12\mu+\frac12\mu'}
   [u_i(a_i^*,\boldsymbol a_{-i})]=\frac12
\]
for every fixed action \(a_i^*\in\mathcal A_i\). Thus, \(\frac12\mu+\frac12\mu'\) is not information-value-free, proving that both sets are non-convex.
\end{example}

The previous example establishes the following.

\begin{proposition}[Example where recommendations are valuable]
There exists a game for which the set of information-value-free CEs or CCEs is not convex.
\end{proposition}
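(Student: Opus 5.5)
The plan is to verify the example directly; the proposition is essentially the example above. In the coordination game $\mathcal G$, I would first check that the point masses $\mu$ on $(\mathsf U,\mathsf L)$ and $\mu'$ on $(\mathsf D,\mathsf R)$ are information-value-free CEs, and hence also information-value-free CCEs. For $\mu$, each player receives payoff $1$, which is the maximum possible payoff, so every swap deviation $\phi_i$ yields at most $1$, and $\barswapreg_i(\mu)\le 0$. Moreover, the fixed action $a_i'=\mathsf U$ (respectively $\mathsf L$) reproduces the payoff $1$, so $\barreg_i(\mu)=0\ge 0$. Since $\barreg_i\le\barswapreg_i$, the CCE condition also holds. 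The same argument applies to $\mu'$ by symmetry. Alternatively, I could note that these profiles are pure Nash equilibria, and product CEs arising from Nash equilibria are information-value-free, as remarked after \Cref{def:ivf-equilibria}.

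Next, consider the mixture $\nu=\tfrac12\mu+\tfrac12\mu'$. It is a CE because the CE constraints are linear in the distribution, so $\barswapreg_i(\nu)\le \tfrac12\barswapreg_i(\mu)+\tfrac12\barswapreg_i(\mu')\le 0$ by convexity of the maximum of linear functions. Consequently it is also a CCE. The key computation is the external regret of $\nu$. The expected payoff is $\E_\nu[u_i]=1$. For any fixed action, for example $\mathsf U$ for player $1$, the expected payoff is $\tfrac12\cdot 1+\tfrac12\cdot 0=\tfrac12$, and likewise for $\mathsf D$ and for player $2$'s actions. Hence $\barreg_i(\nu)=\tfrac12-1=-\tfrac12<0$, so $\operatorname{IV}_i(\nu)=\tfrac12$. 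Thus $\nu$ is not $\delta$-information-value-free for any $\delta<\tfrac12$, and in particular not exactly.

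Combining these steps, $\mu$ and $\mu'$ lie in both the IVFCE set and the IVFCCE set, while their midpoint lies in neither. Therefore neither set is convex. I do not expect any real obstacle here. The only step that needs a moment of care is confirming that the mixture still satisfies the (coarse) correlated equilibrium constraints, so that it fails membership solely because of information-value, and linearity handles this. As an optional strengthening, I would note that the failure is robust: the set of $(\epsilon,\delta)$-IVF(C)CEs is nonconvex for every $\delta<\tfrac12$.
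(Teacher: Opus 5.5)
Your proposal is correct and matches the paper's own argument. It uses the same coordination game and the same two pure-Nash point masses, and it shows that their midpoint is a CE (hence a CCE) with external regret $-\tfrac12$ for every player. Your added details, convexity of $\barswapreg_i$ as a maximum of linear functionals and the robustness for all $\delta<\tfrac12$, are correct elaborations of steps the paper states briefly.
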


Another illustrative example is based on Shapley's game, which was analyzed by~\citet{Shapley63:Some} in order to prove that fictitious play does not converge in general-sum games.
\[
\renewcommand{\arraystretch}{1.2}
\begin{array}{c|ccc}
    & \mathsf{L} & \mathsf{M} & \mathsf{R} \\ \hline
\mathsf{U}
    & (0,0)
    & \cellcolor{payoffgray}(1,2)
    & \cellcolor{payoffgray}(2,1) \\
\mathsf{C}
    & \cellcolor{payoffgray}(2,1)
    & (0,0)
    & \cellcolor{payoffgray}(1,2) \\
\mathsf{D}
    & \cellcolor{payoffgray}(1,2)
    & \cellcolor{payoffgray}(2,1)
    & (0,0)
\end{array}
\]
The distribution that mixes uniformly over the outcomes in the gray cells above is a CE, but is not information-value-free.

We next point out an example of an IVF(C)CE that is not a Nash equilibrium.

\begin{example}[IVFCE that is not a Nash equilibrium]
Consider the following two-player game
\[
\renewcommand{\arraystretch}{1.2}
\begin{array}{c|cc}
             & \mathsf{L} & \mathsf{R} \\ \hline
\mathsf{U}   & \cellcolor{payoffgray}(1,0) & (0,0) \\
\mathsf{D}   & (0,0) & \cellcolor{payoffgray}(0,1)
\end{array}
\]
Let \(\mu\) be the point mass on \((\mathsf{U},\mathsf{L})\), and let \(\mu'\) be the point mass on \((\mathsf{D},\mathsf{R})\). Then it is not hard to verify that \(\frac12\mu+\frac12\mu'\) is an IVFCE, but is not a Nash equilibrium.
\end{example}

\section{Information-value-free correlated equilibrium}
\label{sec:ivfce}

In this section, we study the complexity of computing an \(\epsilon\)-\(\IVFCE\). We first establish an exponential payoff-query lower bound for general many-player games and then prove \(\PPAD\)-completeness in polymatrix games.

\subsection{Exponential query lower bound}

We will reduce from the black-box query lower bound of~\citet{Rubinstein16:Settling} for finding an approximate Brouwer fixed point, which is a strengthening the classic, earlier result of~\citet{Hirsch89:Exponential}.\footnote{We caution that~\citet{Rubinstein16:Settling} uses a normalized version of the Euclidean norm, whereas we use the usual, unnormalized definition.}

\begin{theorem}[Query lower bound for Brouwer fixed point; \citealp{Rubinstein16:Settling}]
\label{thm:black-box-brouwer}
There is a universal constant \(\alpha > 0\) such that every randomized algorithm with oracle access to a \((1/\alpha)\)-Lipschitz map \(f:[0,1]^n\to[0,1]^n\) requires at least \(2^{\Omega(n)}\) queries to $f$ in order to output, with probability at least \(2/3\), a point \(\boldsymbol x\) satisfying \(\|f(\boldsymbol x)-\boldsymbol x\|_2\le\alpha\sqrt n\).
\end{theorem}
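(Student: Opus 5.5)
This theorem is quoted from \citet{Rubinstein16:Settling} and used as a black box; what follows is how I would reprove it, following his construction. The plan is to reduce from a discrete hidden-path search problem, then embed that path in the continuous cube with enough geometric separation that the displacement stays of order \(\sqrt n\) away from the path's end.

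\textbf{Step 1 (discrete lower bound).} Fix \(k=\Theta(n)\). Let the hidden object be a simple path \(P=(v_0,v_1,\dots,v_T)\) of length \(T=2^{\Omega(k)}\) in the hypercube graph \(\{0,1\}^k\). The path is accessed through an oracle that, given a vertex \(v\), reports whether \(v\in P\) and, if so, its predecessor and successor. The task is to output the end \(v_T\). I would prove a \(2^{\Omega(k)}\) randomized lower bound via Yao's principle. The hard distribution is a random self-avoiding path, built, say, as a concatenation of random short segments. The key observation is that a query off the path reveals essentially nothing, since the path occupies an exponentially small fraction of the cube. A query on the path reveals only the position of that vertex, and the unrevealed suffix remains uniformly random. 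So after \(q\) queries the algorithm has, with high probability, only explored within \(\poly(k)\cdot q\) steps of known vertices, and cannot hit \(v_T\) unless \(q\) is exponential.

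\textbf{Step 2 (geometric embedding).} Fix a binary error-correcting code \(C:\{0,1\}^k\to\{0,1\}^n\) with constant rate and constant relative distance. Map each vertex \(v\) to the corner \(C(v)\) of \([0,1]^n\), with coordinates shrunk toward the center. Replace each path edge \((v_t,v_{t+1})\) by a short polygonal chain from \(C(v_t)\) to \(C(v_{t+1})\) that changes one block of coordinates at a time. This keeps distinct chains, and non-adjacent pieces of the same chain, at Euclidean distance \(\Omega(\sqrt n)\).

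I would then define \(f(\boldsymbol x)=\boldsymbol x+\delta\, g(\boldsymbol x)\) for a vector field \(g\) with \(\|g(\boldsymbol x)\|_2=\Theta(\sqrt n)\) everywhere except near the image of \(v_T\):
\begin{itemize}[nosep]
\item far from the path, \(g\) is a fixed default direction, say toward the all-zeros corner;
\item inside a tube of radius \(c\sqrt n\) around the path, \(g\) points along the path;
\item at the entrance near \(v_0\), the default flow is routed into the tube;
\item in between, \(g\) is obtained by Lipschitz interpolation.
\end{itemize}
The separation from the code makes the tube well defined, gives \(f\) Lipschitz constant \(1/\alpha\), and lets us clip \(f\) to stay in the cube. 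The result is that \(\|f(\boldsymbol x)-\boldsymbol x\|_2\le\alpha\sqrt n\) forces \(\boldsymbol x\) to be near \(C(v_T)\), and decoding \(\boldsymbol x\) then recovers \(v_T\).

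\textbf{Step 3 (simulation).} Given \(\boldsymbol x\), one decodes the nearest codewords. By the distance property, only \(O(1)\) path edges can influence \(g(\boldsymbol x)\). So \(f(\boldsymbol x)\) can be evaluated with \(O(1)\) path-oracle queries, and a Brouwer algorithm making \(q\) queries yields a path-end algorithm making \(O(q)\) queries. Combining with Step 1 gives the \(2^{\Omega(n)}\) bound.

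\textbf{Main obstacle.} The hard step is Step 2: guaranteeing displacement \(\Omega(\sqrt n)\) (constant in the normalized norm) at every non-end point while keeping \(f\) \(O(1)\)-Lipschitz. This is hardest at turns of the polygonal chains and at the boundary of the tube, where the along-path direction must blend with the default direction without the interpolation cancelling to zero. My first attempt would be to choose the chain so that consecutive directions are orthogonal (disjoint coordinate blocks). Then any convex combination of two consecutive directions has norm at least a constant fraction of \(\sqrt n\), and the same argument handles blending with the default direction, provided the default direction is chosen nearly orthogonal to all path directions inside the cube's interior.
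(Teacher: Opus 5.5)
\textbf{Step~2 has a gap.} The paper gives no proof of this statement; it imports it from Rubinstein. Your overall plan is the architecture behind that result: a hidden-path lower bound, a code-based embedding, a Hirsch--Papadimitriou--Vavasis-style displacement field, and simulating $f$ with $O(1)$ oracle calls. The trouble is your claim that switching ``one block of coordinates at a time'' keeps distinct chains and non-adjacent pieces $\Omega(\sqrt n)$ apart. This does not follow from $C$ having constant rate and constant relative distance. The claim needs two things:
\begin{enumerate}[label=(\roman*)]
\item every segment flips $\Omega(n)$ coordinates;
\item every hybrid point, with some blocks taken from $C(v_t)$ and the rest from $C(v_{t+1})$, determines its edge.
\end{enumerate}
Both are properties of the restrictions of $C$ to the individual blocks, whereas the distance of $C$ is only a global guarantee. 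For a counterexample, let $E$ be a good code of length $n/2$, set $C(v)=(E(v),v_1,0,\dots,0)$, and use the two halves as blocks. This $C$ has constant rate and relative distance. When $u_1\ne v_1$, the second segment of the chain $C(u)\to C(v)$ has length $1$. So the first segment of this chain and the first segment of the next chain $C(v)\to C(w)$ are non-adjacent, yet within distance $1$. Their directions $E(v)-E(u)$ and $E(w)-E(v)$ make an angle of at least $90^\circ$. Your tube picture therefore breaks down: you assumed separated pieces, a field along the unique nearby piece, and turns blended over radius $c\sqrt n$. With it goes the justification of the $O(1)$ Lipschitz bound, of the $\Omega(\sqrt n)$ displacement, and of the claim that only $O(1)$ edges influence $g(\boldsymbol x)$. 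Step~3 inherits the problem: a hybrid point need not lie within unique-decoding distance of any codeword, so ``decoding the nearest codewords'' does not identify the edge.

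\textbf{The fix.} You need an encoding in which every block carries a full codeword. For instance, take $C(v)=(E(v),E(v),E(v))$ with $E$ a good code of length $n/3$, and switch one copy at a time. With three copies, any two non-adjacent segments contain two distinct codewords of $E$ in some block that neither is switching, so they are $\Omega(\sqrt n)$ apart. Each segment also has length $\Omega(\sqrt n)$, and consecutive segments move along disjoint blocks, so your orthogonal-turn argument applies. A query point can then be decoded block by block into $O(1)$ candidate vertices, which makes Step~3 work. Alternatively, use essentially the layout of the construction the paper cites: keep the current and next codewords in separate coordinate blocks, and add all-zero/all-one control blocks that record the phase.

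\textbf{Two smaller points.} First, make sure the path really starts at the corner $\boldsymbol 0$ toward which the default field points, e.g.\ $v_0=0^k$ with $C$ linear and $C(v_0)=\boldsymbol 0$ not shrunk. Otherwise clipping makes every point near $\boldsymbol 0$ an approximate fixed point. Second, Step~1 is harder than necessary. Since vertices enter only through $C$, the path need not follow hypercube edges. For a uniformly random sequence of distinct vertices, the randomized $2^{\Omega(k)}$ bound follows from a direct hitting argument.
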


\paragraph{The imitation game.}
Given a map \(f: [0,1]^n \rightarrow [0, 1]^n\) per~\cref{thm:black-box-brouwer}, we construct a $2n$-player game as follows. For each coordinate $i \in [n]$, we create two players \(A_i\) and \(B_i\). All players have the same action set \(Q_k \defeq \{0, 1/k, \ldots, 1\}\), which is a discretization of the $[0, 1]$ interval. At a pure profile \((\boldsymbol a,\boldsymbol b)\), we define the utility functions for all $i \in [n]$ as
\begin{equation}
 u_{A_i}(\boldsymbol a,\boldsymbol b)=1-(a_i-b_i)^2 \text{ and }
 u_{B_i}(\boldsymbol a,\boldsymbol b)=1-(b_i-f_i(\boldsymbol a))^2.
 \label{eq:brouwer-game}
\end{equation}

Here, player $A_i$ tries to predict player $B_i$'s action $b_i$, whereas player $B_i$ tries to predict $f_i(\boldsymbol{a})$. We can simulate each payoff query on $(\boldsymbol{a}, \boldsymbol{b})$ with one query at \( f(\boldsymbol a)\).

We now show that we can recover an approximate fixed point of $f$ from any approximate IVFCE. The key is the following lemma, showing that the requirement of correlated equilibrium and information-value-freeness together forces the correlated distribution to be nearly deterministic and close to an approximate Nash equilibrium. The high-level idea is this: given a joint distribution $\mu$ over action profiles, (1) $\mu$ being an approximate CE requires \emph{calibration $\approx$ posterior}; (2) $\mu$ being approximate information-value-free requires \emph{calibration $\approx$ prior}. This implies \emph{posterior $\approx$ prior} and thus the forecast is nearly deterministic.
\begin{lemma}
\label{lem:quadratic-no-information}
Let $u(x, y) = 1 - (x - y)^2: [0,1]\times[0,1] \rightarrow [0,1]$. For any joint distribution $\mu$ over $(x, y) \in Q_k \times [0,1]$, define $\epsilon_{\mathrm{CE}} = \mathbb{E}_{(x,y) \sim \mu}[\max_{x'\in Q_k} u(x',y)\mid x] - \mathbb{E}_{(x,y)\sim\mu}[u(x,y)]$ and $\epsilon_{\mathrm{IVF}} = \mathbb{E}_{(x,y)\sim\mu}[u(x,y)] - \max_{x'\in Q_k}\mathbb{E}_{(x,y)\sim\mu}[u(x',y)]$. Then we have
\begin{enumerate}[label=\arabic*.,ref=\arabic*]
    \item\label{item:quadratic-posterior-variance} $\operatorname{Var}(\E[y\mid x]) \le \epsilon_{\mathrm{CE}} + \epsilon_{\mathrm{IVF}} + 1/(4k^2)$;
    \item\label{item:quadratic-mean-square} $ \E[(x-\E[y])^2]\le 4\epsilon_{\mathrm{CE}} + 2\epsilon_{\mathrm{IVF}}+1/k^2$;
    \item\label{item:quadratic-bias-variance} $ |\E [x]-\E [y])|\le \sqrt{4\epsilon_{\mathrm{CE}} + 2\epsilon_{\mathrm{IVF}}+1/k^2}$ and $\operatorname{Var}(x) \le 4\epsilon_{\mathrm{CE}} + 2\epsilon_{\mathrm{IVF}}+1/k^2$;
\end{enumerate}
\end{lemma}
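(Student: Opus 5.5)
Write $m(x)\defeq\E[y\mid x]$ and $\bar y\defeq\E[y]=\E[m(x)]$. The argument follows the continuous sketch in the technical overview. The new ingredient is the discretization error from restricting deviations to $Q_k$. For any target $t\in[0,1]$ there is $q\in Q_k$ with $|q-t|\le 1/(2k)$. By the bias--variance identity, for any conditioning,
\[
\E[(q-y)^2\mid x]=(q-m(x))^2+\E[(y-m(x))^2\mid x].
\]
Hence $\min_{q\in Q_k}\E[(q-y)^2\mid x]$ lies between $\E[(y-m(x))^2\mid x]$ and that quantity plus $1/(4k^2)$. The unconditional analogue holds with $\bar y$ in place of $m(x)$.

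\textbf{Step 1 (calibration).} Expand $\E[(x-y)^2]=\E[(x-m(x))^2]+\E[(y-m(x))^2]$. The cross term vanishes because $x-m(x)$ is $x$-measurable and $\E[y-m(x)\mid x]=0$. Since the best conditional deviation achieves at most $\E[(y-m(x))^2]+1/(4k^2)$, the definition of $\epsilon_{\mathrm{CE}}$ gives
\[
\E[(x-m(x))^2]\le \epsilon_{\mathrm{CE}}+1/(4k^2).
\]
It may be that this $1/(4k^2)$ does not appear at all: $\max_{x'\in Q_k}$ is taken pointwise in $x$ and $x\in Q_k$, and one can compare directly with the grid point nearest $m(x)$. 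Bookkeeping will tell whether it can be avoided. Either way, the stated constants leave slack.

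\textbf{Step 2 (information value).} By the unconditional identity, $\max_{x'\in Q_k}\E[u(x',y)]\ge 1-\operatorname{Var}(y)-1/(4k^2)$. By the law of total variance, $\operatorname{Var}(y)=\E[(y-m(x))^2]+\operatorname{Var}(m(x))$. The definition of $\epsilon_{\mathrm{IVF}}$ then gives
\[
\E[(x-y)^2]\le \operatorname{Var}(y)+1/(4k^2)+\epsilon_{\mathrm{IVF}}.
\]
Substitute the decomposition of $\E[(x-y)^2]$ from Step 1 and drop the nonnegative term $\E[(x-m(x))^2]$. This yields $\operatorname{Var}(m(x))\le \epsilon_{\mathrm{IVF}}+1/(4k^2)$, which is stronger than item~\ref{item:quadratic-posterior-variance}.

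If the sign convention for $\epsilon_{\mathrm{IVF}}$ instead makes it naturally relative to the conditional best response, I will route through the CE inequality once more. This yields the bound $\epsilon_{\mathrm{CE}}+\epsilon_{\mathrm{IVF}}+1/(4k^2)$ exactly as stated. The careful point is to compare the unconditional grid best response with $\bar y$ and not with the exact minimizer.

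\textbf{Step 3 (items 2 and 3).} Write $x-\bar y=(x-m(x))+(m(x)-\bar y)$ and apply $(a+b)^2\le 2a^2+2b^2$:
\[
\E[(x-\bar y)^2]\le 2\E[(x-m(x))^2]+2\operatorname{Var}(m(x)).
\]
Steps 1 and 2 then give $2(\epsilon_{\mathrm{CE}}+1/(4k^2))+2(\epsilon_{\mathrm{CE}}+\epsilon_{\mathrm{IVF}}+1/(4k^2))\le 4\epsilon_{\mathrm{CE}}+2\epsilon_{\mathrm{IVF}}+1/k^2$, which is item~\ref{item:quadratic-mean-square}. Item~\ref{item:quadratic-bias-variance} follows from $\E[(x-\bar y)^2]=\operatorname{Var}(x)+(\E[x]-\bar y)^2$: each summand is bounded by the total, and taking a square root gives the bias bound.

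The only step needing care is Step 2. There, the grid restriction in the maximum must be handled in the right direction: the grid can only lower the deviation payoff, which is what we need, at cost $1/(4k^2)$.
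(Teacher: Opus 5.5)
Your Steps~1 and~3 are correct given item~1 and follow the paper, but Step~2, which is what has to deliver item~1, fails as written. From $\epsilon_{\mathrm{IVF}}=\E[u(x,y)]-\max_{x'\in Q_k}\E[u(x',y)]$ and your bound $\max_{x'\in Q_k}\E[u(x',y)]\ge 1-\operatorname{Var}(y)-1/(4k^2)$, what actually follows is $\E[(x-y)^2]\le\operatorname{Var}(y)+1/(4k^2)-\epsilon_{\mathrm{IVF}}$. Your display has the sign of $\epsilon_{\mathrm{IVF}}$ flipped. After substituting both decompositions this becomes $\E[(x-m(x))^2]\le\operatorname{Var}(m(x))+1/(4k^2)-\epsilon_{\mathrm{IVF}}$, which bounds $\operatorname{Var}(m(x))$ from \emph{below}. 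Dropping $\E[(x-m(x))^2]$ gives no upper bound at all.

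In fact the ``stronger'' claim $\operatorname{Var}(m(x))\le\epsilon_{\mathrm{IVF}}+1/(4k^2)$ is false. Take $k=4$, $x$ uniform on $\{0,1\}$, and $y\in\{0,1\}$ with $\Prb[y=1\mid x]=1/4+x/2$. Then $\E[u(x,y)]=3/4=\max_{x'\in Q_4}\E[u(x',y)]$, so $\epsilon_{\mathrm{IVF}}=0$, while $\operatorname{Var}(m(x))=1/16>1/64$. Information-value-freeness alone cannot control the posterior mean, since a miscalibrated but informative recommendation can be worth nothing. The CE term is therefore indispensable in this step. Your fallback (``route through the CE inequality once more'') is never carried out, yet Step~3 invokes exactly the bound $\epsilon_{\mathrm{CE}}+\epsilon_{\mathrm{IVF}}+1/(4k^2)$ that it would have to supply.

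The repair is to use the grid in the direction you name in your closing sentence. Restricting to $Q_k$ can only lower the fixed-action payoff, so $\max_{x'\in Q_k}\E[u(x',y)]\le 1-\operatorname{Var}(y)$, at no $1/(4k^2)$ cost. This gives $\E[(x-y)^2]\ge\operatorname{Var}(y)-\epsilon_{\mathrm{IVF}}$, that is, $\operatorname{Var}(m(x))\le\E[(x-m(x))^2]+\epsilon_{\mathrm{IVF}}$. Now keep the calibration term and bound it by Step~1, which yields item~1 exactly. This is equivalent to the paper's one-line computation of $\epsilon_{\mathrm{CE}}+\epsilon_{\mathrm{IVF}}$ as the gap between the conditional and unconditional grid best responses, $\operatorname{Var}(\E[y\mid x])-\E[\dist(\E[y\mid x],Q_k)^2]+\dist(\E[y],Q_k)^2$.

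Separately, your aside in Step~1 is also mistaken: the $1/(4k^2)$ there cannot be avoided. For example, $x\equiv 0$ and $y\equiv 1/(2k)$ give $\epsilon_{\mathrm{CE}}=0$ but $\E[(x-m(x))^2]=1/(4k^2)$. Since you never rely on the aside, this is harmless.
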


\begin{proof}
Since $\dist(\cdot ,Q_k) \le 1/(2k)$, we have
\begin{align*}
    \epsilon_{\mathrm{CE}} &=\mathbb{E}_{(x,y) \sim \mu}[\max_{x'\in Q_k} u(x',y)\mid x] - \mathbb{E}_{(x,y)\sim\mu}[u(x,y)]\\
    &= \E[(x- \E[y \mid x])^2] - \E[ \dist(\E[y \mid x], Q_k)^2] \\
    &\ge \E[(x- \E[y \mid x])^2] - \frac{1}{4k^2}.
\end{align*}
By definition of $\epsilon_{\mathrm{IVF}}$, we also have
\begin{align*}
    \epsilon_{\mathrm{CE}}+ \epsilon_{\mathrm{IVF}} &=\mathbb{E}_{(x,y) \sim \mu}[\max_{x'\in Q_k} u(x',y)\mid x] - \mathbb{E}_{(x,y)\sim\mu}[u(x,y)] \\
    &\quad + \mathbb{E}_{(x,y)\sim\mu}[u(x,y)] - \max_{x'\in Q_k}\mathbb{E}_{(x,y)\sim\mu}[u(x',y)]\\
    &= \mathbb{E}_{(x,y) \sim \mu}[\max_{x'\in Q_k} u(x',y)\mid x] - \max_{x'\in Q_k}\mathbb{E}_{(x,y)\sim\mu}[u(x',y)]\\
    &=\operatorname{Var}(\E[y\mid x]) - \E[ \dist(\E[y \mid x], Q_k)^2] + \E[ \dist(\E[y], Q_k)^2] \\
    &\ge \operatorname{Var}(\E[y\mid x]) - \frac{1}{4k^2}.
\end{align*}
This proves \Cref{item:quadratic-posterior-variance}.
Since $x - \E[y] = x - \E[y\mid x] + \E[y\mid x] - \E[y]$, we have
\begin{align*}
    \E[(x - \E[y])^2] &\le 2\E[(x - \E[y\mid x])^2] + 2\E[(\E[y\mid x] - \E[y])^2]\\
    &\le 2\E[(x - \E[y\mid x])^2] + 2\operatorname{Var}(\E[y\mid x])\\
    &\le 4\epsilon_{\mathrm{CE}} + 2\epsilon_{\mathrm{IVF}} + \frac{1}{k^2}.
\end{align*}
This proves \Cref{item:quadratic-mean-square}. Finally, by the bias-variance identity, we have $\E[(x - \E[y])^2] = \operatorname{Var}(x) + (\E[x] - \E[y])^2$. Combining this with \Cref{item:quadratic-mean-square} proves \Cref{item:quadratic-bias-variance}.
\end{proof}

We are now ready to establish an exponential query complexity lower bound for IFVCE, which we restate below.

\mainquery*

\begin{proof} Let \(\mu\) be an \(\epsilon\)-\(\IVFCE\) of the $2n$-player imitation game described above. We define $\boldsymbol{\bar{a}} = \E_{(\boldsymbol{a}, \boldsymbol{b}) \sim \mu}[\boldsymbol{a}]$ and $\boldsymbol{\bar{b}} = \E_{(\boldsymbol{a}, \boldsymbol{b}) \sim \mu}[\boldsymbol{b}]$. We define $\rho = \sqrt{6\epsilon+1/k^2}$. Applying \cref{lem:quadratic-no-information} to each player \(A_i\), with \(x=a_i\) and \(y=b_i\), gives \(|\bar{a}_i - \bar{b}_i|\le\rho\) and \(\operatorname{Var}(a_i)\le\rho^2\). Summing the coordinates, we get 
\begin{equation}
 \|\boldsymbol{\bar{a}}-\boldsymbol {\bar{b}}\|_2^2
 =\sum_{i=1}^n(\bar{a}_i-\bar{b}_i)^2\le n\rho^2
 \text{ and }
 \E[\|\boldsymbol{a}- \boldsymbol{\bar{a}}\|_2^2]
 =\sum_{i=1}^n\operatorname{Var}(a_i)\le n\rho^2.
 \label{eq:first-imitation-layer}
\end{equation}
Similarly, applying \Cref{lem:quadratic-no-information} to each player \(B_i\), with \(x=b_i\) and \(y=f_i(\boldsymbol a)\), gives
\[
 \|\boldsymbol{\bar{b}}-\E[f(\boldsymbol a)]\|_2^2
 \le n\rho^2.
\]
By Jensen's inequality and \eqref{eq:first-imitation-layer}, we get
\begin{align*}
    \|f(\boldsymbol{\bar{a}}) - \E [f(\boldsymbol a)]\|_2 &= \|\E [ f(\boldsymbol{\bar{a}})- f(\boldsymbol a)]\|_2 \\
    &\leq \E[ \| f(\boldsymbol{\bar{a}})- f(\boldsymbol a) \|_2 ] \\
 &\le (1/\alpha) \E[\|\boldsymbol{\bar{a}}-\boldsymbol a\|_2] \\
 &\le (1/\alpha) \sqrt n\ \rho,
\end{align*}
where we used the fact that $\|\cdot\|_2$ is convex and $f$ is $(1/\alpha)$-Lipschitz continuous. Moreover, by the triangle inequality,
\begin{equation*}
 \|f(\boldsymbol{\bar{a}})-\boldsymbol{\bar{a}}\|_2 \le  \|f(\boldsymbol{\bar{a}})-\E [f(\boldsymbol a)]\|_2 + \|\E [f(\boldsymbol a)] - \boldsymbol{\bar{b}}\|_2 + \|\boldsymbol{\bar{b}} - \boldsymbol{\bar{a}} \|_2
 \le(1/\alpha+2)\sqrt n\sqrt{6\epsilon+ \frac{1}{k^2}}.
\end{equation*}

We now choose the constants of the reduction as
\[k=\left\lceil\frac{8(1/\alpha+2)}\alpha\right\rceil
 \text{ and }
 \epsilon=\frac{\alpha^2}{384(1/\alpha+2)^2}.
\]
This parameter choice guarantees that
\[
\|f(\boldsymbol{\bar{a}})-\boldsymbol{\bar{a}}\|_2 \le (1/\alpha +2) \sqrt{n}\sqrt{6\epsilon + \frac{1}{k^2} } \le \alpha \sqrt{n}.
\]
As a result, the black-box lower bound in \cref{thm:black-box-brouwer} implies that finding an $\epsilon$-IVFCE in a $2n$-player game requires \(2^{\Omega(n)}\) payoff queries.
\end{proof}

\subsection{Computational complexity in polymatrix games} We now turn to the computational complexity of IVFCEs. We will reduce computing an approximate Nash equilibrium in constant-degree polymatrix games to computing an approximate IVFCE. This reduction hinges the following hardness result for computing an approximate Nash equilibrium~\citep{Rubinstein15:Inapproximability,Deligkas24:Pure}.

\begin{theorem}[\(\PPAD\)-hardness for Nash equilibrium in polymatrix games; \citealp{Rubinstein15:Inapproximability,Deligkas24:Pure}]
\label{thm:hard-polymatrix}
There is a universal constant \(\gamma\in(0,1]\) such that computing a \(\gamma\)-Nash equilibrium is \(\PPAD\)-hard even in bipartite polymatrix games of maximum degree three with two actions per player.
\end{theorem}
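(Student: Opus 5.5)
The plan follows the route of \citet{Rubinstein15:Inapproximability}, together with the degree-reduction refinements of \citet{Deligkas24:Pure}. We start from a \(\PPAD\)-hard problem with constant approximation and simulate it by a sparse polymatrix game with binary actions. The source problem is \(\varepsilon\)-\textsc{Gcircuit}, the generalized circuit problem. Its gates are \(G_{\leftarrow}\) (constant), \(G_{=}\) (copy), \(G_{+}\), \(G_{-}\), \(G_{\times\zeta}\) (scaling), \(G_{<}\) (comparison) and Boolean gates. Each gate is fan-in at most two, and a solution is an assignment \(x\in[0,1]^V\) satisfying every gate up to additive error \(\varepsilon\). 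We invoke Rubinstein's theorem that this problem is \(\PPAD\)-hard for some absolute constant \(\varepsilon>0\). That theorem is itself obtained from a Brouwer instance on \([0,1]^n\) via an error-correcting encoding of the simplicial path-following construction, and we take it as the black box.

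Next, each node \(v\) of the circuit becomes a player \(p_v\) with two actions \(\{0,1\}\). The value \(x_v\) is identified with the probability that \(p_v\) plays \(1\). For each gate we introduce one auxiliary binary player \(w_g\) and use standard Daskalakis--Goldberg--Papadimitriou style gadgets. The output player \(p_v\) receives a bilinear payoff from its interaction with \(w_g\) only. Player \(w_g\) is paid according to the sign of an affine expression in its inputs, such as \(x_u+x_{u'}-x_v\) for addition, so that its interaction with each input or output is a \(2\times2\) matrix.

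The key check is that in a \(\gamma\)-Nash equilibrium, with \(\gamma\) a small constant multiple of \(\varepsilon\), every gate is \(\varepsilon\)-satisfied. One subtlety needs care. A \(\gamma\)-Nash equilibrium bounds only the \emph{average} regret, so under the normalization \(1/|N_i|\) each player's payoff differences must be kept constant-size. This requires all degrees and all gadget coefficients to be \(O(1)\). That is exactly why we need bounded fan-out and a constant \(\varepsilon\): only then does the loss from approximation stay constant rather than \(1/\poly\).

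To make the graph bipartite of maximum degree three, we first bound fan-out. Any node with large fan-out is replaced by a binary tree of \(G_{=}\) copy gates. Each copy gate incurs error \(\varepsilon\), and accumulated error along long trees would be fatal. We therefore rely on the fact that Rubinstein's hard instances already have constant fan-out, or else use the error-tolerant copying of \citet{Deligkas24:Pure}, so the depth of copying stays \(O(1)\). After this step, each circuit node touches at most three gadgets (one in-gate and at most two out-gates). Each gadget player touches at most three circuit players (two inputs and one output). The graph on \(\{p_v\}\cup\{w_g\}\) is then bipartite by construction, with maximum degree three. Finally, we rescale every matrix into \([0,1]\); this changes \(\gamma\) by at most a constant factor.

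The main obstacle is the degree and normalization bookkeeping, not the gadget logic itself. We must show that enforcing degree three and bipartiteness (splitting nodes, inserting copies, re-balancing weights so that each \(\mathbf M_{ij}\) lies in \([0,1]\)) keeps the incentive gaps at constant size. We also need the comparison gate \(G_<\), whose brittleness is inherent, to remain tolerable within the \(\varepsilon\)-\textsc{Gcircuit} semantics. I would first verify each gadget as a stand-alone lemma with explicit constants, then take the minimum over gadgets to fix \(\gamma\).
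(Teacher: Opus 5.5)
\textbf{Overall.} The paper does not prove this statement; it imports it from Rubinstein and from Deligkas et al. Your outline follows Rubinstein's route: $\varepsilon$-\textsc{Gcircuit} plus one auxiliary player per gate. The bipartite, degree-three bookkeeping is right. So is the claim that $\gamma=\Theta(\varepsilon)$ suffices, but the real subtlety is not the $1/|N_i|$ normalization. It is that a $\gamma$-Nash equilibrium, unlike a well-supported one, lets a player keep mass up to $\gamma/\Delta$ on an action that is worse by $\Delta$. Your two-level gadget absorbs this. The auxiliary player faces a gap of order $\varepsilon$, so for $\gamma\le c\,\varepsilon$ it is right only with probability at least $3/4$. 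That still gives the output player a constant gap, which forces it to within $O(\gamma)\le\varepsilon$ of the correct pure action.

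\textbf{The fan-out step.} This is the step that does not go through as written. For genuinely real-valued nodes there is no loss-free copying in $\varepsilon$-\textsc{Gcircuit}. A tree of $G_{=}$ gates of depth $h$ may drift by $h\varepsilon$, and with fan-out two the depth is at least logarithmic in the original fan-out. So no error-tolerant copying can keep the depth $O(1)$. Loss-free duplication exists only for signals meant to be Boolean, by copying through a restoring Boolean gate. For example, an OR of $u$ with itself outputs an $\varepsilon$-Boolean value whenever $x_u$ is $\varepsilon$-Boolean, so logarithmic depth is then harmless.

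You must therefore invoke the black box in a form that already controls fan-out. That means hard instances in which every genuinely real-valued node is read only $O(1)$ times and only Boolean signals fan out widely. The analysis must also tolerate different copies of an ambiguous bit disagreeing. This is a property of the hard instances, not something your reduction can supply; your first alternative points at it but leaves it unverified.

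The copying of Deligkas et al.\ cannot be spliced into $\varepsilon$-\textsc{Gcircuit}. It lives in \textsc{Pure-Circuit}, where all values lie in $\{0,1,\bot\}$ and copy or purify gates duplicate without loss. Using it means replacing $\varepsilon$-\textsc{Gcircuit} and your arithmetic gadgets with Boolean ones. That is a genuinely different route, and it is the one that yields explicit constants.
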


We recall that a $\gamma$-Nash equilibrium is a (product) strategy profile in which the (external) regret of each player is at most $\gamma$.

\paragraph{High-level idea.} Given an $n$-player $2$-action polymatrix game, we will construct an $n$-player game with the following action set: each player chooses a binary action, as in the original game, and a \emph{prediction} for each of its neighbor's action. A player's utility is a linear combination of the utility from the original game and a quadratic score corresponding with forecasting their neighbor's actions. As before, we will use the fact that CE calibrates the forecasts while information-value-freeness ensures that conditioning on the forecasts has little value. Together, these two properties imply that we can recover an approximate Nash equilibrium by taking the marginals of an approximate IVFCE.

\paragraph{The reduction.} More specifically, let \(\mathcal G\) be an $n$-player binary-action polymatrix game per~\cref{thm:hard-polymatrix}. This game is described through a bipartite degree-three graph \(([n],E)\), neighbor sets \(N_i\), binary action sets \(\mathcal A_i^{\mathcal G}=\{\mathsf{0},\mathsf{1}\}\), payoff matrices \(\mathbf M_{ij}^{\mathcal G}\in[0,1]^{2\times2}\) across edges $(i, j) \in E$, and normalized utilities
\[
 u_i^{\mathcal G}(\boldsymbol a)
 =\frac1{|N_i|}\sum_{j\in N_i}
 \boldsymbol e_{a_i}^{\top}\mathbf M_{ij}^{\mathcal G}\boldsymbol e_{a_j}.
\]
We consider an integer \(k\ge1\) and a weight \(\lambda\in(0,1)\), whose values will be specified later. As before, we discretize the interval $[0, 1]$ by taking  \(Q_k \defeq \{0,1/k,\ldots,1\}\). Further, we denote by \(B(q, a)=1-(q - a)^2\) the quadratic score.\footnote{We use the notation $B(\cdot, \cdot)$ as an allusion to the \emph{Brier score}, although we caution that the Brier score is typically thought of as a cost function.} We construct a game \(\mathcal H\) on the same graph. Player \(i\)'s action set is $\mathcal A_i^{\mathcal H}
 \defeq \mathcal A_i^{\mathcal G}\times Q_k^{N_i}$. We will the shorthand notation $y_i = (a_i,(q_{ij})_{j\in N_i} )$, where \(a_i\in\{\mathsf{0},\mathsf{1}\}\) is the source action and \(q_{ij}\) forecasts the probability that neighbor \(j\)'s source action equals \(\mathsf{1}\). Thus, each player $i$ has at most \(m_i^{\mathcal H}=|\mathcal A_i^{\mathcal H}|=2(k+1)^{|N_i|}\le2(k+1)^3\) actions. Finally, for every ordered adjacent pair \((i,j)\), we define the payoff matrix of \(\mathcal H\) entrywise by
\[
 [\mathbf M_{ij}^{\mathcal H}]_{y_i,y_j}
 =\lambda\boldsymbol e_{a_i}^{\top}\mathbf M_{ij}^{\mathcal G}\boldsymbol e_{a_j}
 +(1-\lambda)B(q_{ij},a_j),
\]
so that for a strategy profile \(\boldsymbol y=(y_i)_{i\in[n]}\) and its source-action coordinates \(\boldsymbol a=(a_i)_{i\in[n]}\), the utility of player $i$ reads
\begin{equation}
 u_i^{\mathcal H}(\boldsymbol y)
 =\frac1{|N_i|}\sum_{j\in N_i}[\mathbf M_{ij}^{\mathcal H}]_{y_i,y_j}
 =\lambda u_i^{\mathcal G}(\boldsymbol a)
 +\frac{1-\lambda}{|N_i|}\sum_{j\in N_i}B(q_{ij},a_j).
 \label{eq:forecast-payoff}
\end{equation}
In particular, $\mathcal{H}$ is indeed an $n$-player constant-action constant-degree polymatrix game. Our goal is to establish the following theorem, which we restate from the introduction.

\mainppad*

\begin{proof}
Membership in $\PPAD$ follows from the fact that an exact mixed Nash equilibrium of a polymatrix game belongs to \(\PPAD\)~\citep{FilosRatsikas24:PPAD}. For the hardness, let \(\mu\) be an \(\epsilon\)-\(\IVFCE\) of the constructed game \(\mathcal H\). We are interested in drawing an action profile \(\boldsymbol y\sim\mu\). Let \(\boldsymbol a=(a_i)_{i\in[n]}\) be the corresponding source actions. For every player $i$ and every adjacent pair $(i,j) \in E$, we define $p_j=\Prb[a_j=\mathsf{1}]$,  $P_{ij}=\Prb[a_j=\mathsf{1}\mid y_i]$, and $\boldsymbol x_j=(1-p_j,p_j)$. By the law of conditional probability, we have \(\E P_{ij}= p_j \). In this context, for a player \(i\), we define the following three scores:
\begin{itemize}
    \item[1.]  \emph{Forecasting score:} $\frac1{|N_i|}\sum_{j\in N_i}\E_\mu[B(q_{ij},a_j)]$, which is the score obtained by following the recommendations.
    \item[2.] \emph{Posterior score:} $\frac1{|N_i|}\sum_{j\in N_i}\E[ \max_{q_{ij}' \in Q_k} B(q_{ij}', a_j)\mid y_i]$, which is the score obtained by choosing the conditionally optimal report upon observing a recommendation.
    \item[3.] \emph{Prior score:} $\frac1{|N_i|}\sum_{j\in N_i} \max_{q_{ij}' \in Q_k} \E_{\mu} [B(q_{ij}', a_j)]$, which is the score by choosing the optimal fixed report without observing the recommendations.
\end{itemize}
The goal now is to apply~\Cref{lem:quadratic-no-information}. However, that lemma concerns the imitation game, which contains only the component of the utility corresponding to the quadratic score. This issue can be addressed by taking $\lambda$ to be sufficiently small, so that the quadratic score dominates the utility of each player. 

After observing \(y_i\), player \(i\) may leave \(a_i\) unchanged and replace every report \(q_{ij}\) by a conditionally optimal grid point. This is a valid strategy deviation which alters only the forecasting component. Since $\mu$ is an $\epsilon$-CE, it follows that 
\[
    \frac1{|N_i|}\sum_{j\in N_i}\E[ \max_{q_{ij}' \in Q_k} B(q_{ij}', a_j)\mid y_i] \leq \frac1{|N_i|}\sum_{j\in N_i}\E_\mu[B(q_{ij},a_j)] + \frac{\epsilon}{1 - \lambda}.
\]
Moreover, since the utilities in the original game lie in $[0,1]$, player $i$'s best fixed action has value at most 
\[
    \lambda + (1- \lambda) \frac1{|N_i|}\sum_{j\in N_i} \max_{q_{ij}' \in Q_k} \E_{\mu} [B(q_{ij}', a_j)],
\]
while following the recommendation guarantees value at least 
\[
  (1-\lambda) \frac1{|N_i|}\sum_{j\in N_i}\E_\mu[B(q_{ij},a_j)]
\]
Since $\mu$ is $\epsilon$-information-value-free, we get
\[
   (1-\lambda) \frac1{|N_i|}\sum_{j\in N_i}\E_\mu[B(q_{ij},a_j)] \leq \lambda + (1- \lambda) \frac1{|N_i|}\sum_{j\in N_i} \max_{q_{ij}' \in Q_k} \E_{\mu} [B(q_{ij}', a_j)] + \epsilon.
\]
Equivalently,
\[
    \frac1{|N_i|}\sum_{j\in N_i}\E_\mu[B(q_{ij},a_j)] \leq \frac1{|N_i|}\sum_{j\in N_i} \max_{q_{ij}' \in Q_k} \E_{\mu} [B(q_{ij}', a_j)] + \frac{\epsilon + \lambda}{1 - \lambda}.
\]
As a result, applying~\Cref{lem:quadratic-no-information} yields
\begin{equation}
 \frac1{|N_i|}\sum_{j\in N_i}\E[(P_{ij}-p_j)^2]
 \le
 \frac{2\epsilon+\lambda}{1-\lambda}+\frac1{4k^2}.
 \label{eq:information-bound}
\end{equation}

We now argue that the marginals of $\mu$, projected onto the source actions $\{\boldsymbol x_i = (1-p_i, p_i)\}_{i \in [n]}$, constitute an approximate Nash equilibrium of the original game $\mathcal{G}$. Conditional on \(y_i\), we write
\(r_i(a_i \mid y_i) \defeq \E[u_i^{\mathcal G}(a_i ,\boldsymbol a_{-i})\mid y_i]\) for \(a_i \in\{\mathsf{0},\mathsf{1}\}\). A deviation in \(\mathcal H\) may replace only the source action and leave all reports unchanged. Choosing a conditionally best source action and applying CE obedience gives
\begin{equation}
 \E\left[\max_{a_i' \in \{\mathsf{0},\mathsf{1}\} } r_i(a_i'\mid y_i)-r_i(a_i\mid y_i) \right]
 \le\frac{\epsilon}{\lambda}.
 \label{eq:conditional-obedience}
\end{equation}
Replacing the posterior law of
\(a_j\) by its marginal changes the expected payoff on edge \((i,j)\)
by
\[
\left|
(P_{ij}-p_j)
\left(
\mathbf M_{ij}^{\mathcal G}(a_i,\mathsf{1})
-\mathbf M_{ij}^{\mathcal G}(a_i,\mathsf{0})
\right)
\right|
\le |P_{ij}-p_j|,
\]
where the inequality follows because the entries of every \(\mathbf M_{ij}^{\mathcal G}\) lie in \([0,1]\). Thus, for either source action \(a_i \in \{\mathsf{0},\mathsf{1}\} \),
\[
 |r_i(a_i \mid y_i)- u_i^{\mathcal G}(a_i,\boldsymbol x_{-i}) |
 \le \frac1{|N_i|}\sum_{j\in N_i}|P_{ij}-p_j|.
\]
Applying the above inequality twice gives
\[
 \max_{a_i' \in \{\mathsf{0},\mathsf{1}\} } u_i^{\mathcal G}(a'_i, \boldsymbol x_{-i}) -u_i^{\mathcal G}(a_i,\boldsymbol x_{-i})
 \le \max_{a_i' \in \{\mathsf{0},\mathsf{1}\} } r_i(a_i'\mid y_i)-r_i(a_i\mid y_i)
 +\frac2{|N_i|}\sum_{j\in N_i}|P_{ij}-p_j|.
\]
Taking expectations, using \eqref{eq:conditional-obedience}, and applying Cauchy--Schwarz with \eqref{eq:information-bound} yield
\begin{equation}
 \max_{a_i \in \{\mathsf{0},\mathsf{1}\}}u_i^{\mathcal G}(a_i, \boldsymbol x_{-i})
 -u_i^{\mathcal G}(\boldsymbol x_i,\boldsymbol x_{-i})
 \le \frac{\epsilon}{\lambda}
 +2\sqrt{\frac{2\epsilon+\lambda}{1-\lambda}+\frac1{4k^2}}.
 \label{eq:extraction-bound}
\end{equation}

Now, let \(\gamma\) be the constant in \cref{thm:hard-polymatrix}. We choose the parameters as $\lambda=(\gamma/16)^2$, $k=\lceil16/\gamma\rceil$, and $\epsilon=\gamma\lambda/4$, which are universal constants. The constructed game has at most \(m=2(k+1)^3\) actions per player, which is also a universal constant. Moreover, we have \(\epsilon/\lambda=\gamma/4\), \(\gamma\le1\) implies
\((2\epsilon+\lambda)/(1-\lambda)\le3\lambda\), while the choice of \(k\) gives \(1/(4k^2)\le\gamma^2/1024\). Substituting these bounds into \eqref{eq:extraction-bound} yields
\[
\max_{a_i \in\{\mathsf{0},\mathsf{1}\}}
 \left[u_i^{\mathcal G}(a_i,\boldsymbol x_{-i})
 -u_i^{\mathcal G}(\boldsymbol x_i,\boldsymbol x_{-i})\right]
 \le\frac\gamma4+\frac{\sqrt{13}}{16}\gamma<\gamma.
\]
Thus, any algorithm for computing an \(\epsilon\)-\(\IVFCE\) of \(\mathcal H\) would give a \(\gamma\)-approximate Nash equilibrium of \(\mathcal G\), and \(\PPAD\)-completeness follows.
\end{proof}
\section{Information-value-free coarse correlated equilibrium}
\label{sec:ivfcce}

For information-value-free CCEs, we first point out algorithms such as regret matching or FTRL yield an FPTAS. When the precision $\epsilon$ is exponentially small, we show that the previous algorithm based on no-regret dynamics can be turned into a $\CLS$ membership argument. However, the problem is at least as hard as solving LCPs with a P-matrix, so unlikely to admit a polynomial-time algorithm.

\subsection{An FPTAS via no-regret learning}
\label{subsec:ivfcce-fptas}

For mixed-strategy profiles \(\boldsymbol x^{(1)},\ldots,\boldsymbol x^{(T)}\), we let
\(\mu^{(T)} =T^{-1}\sum_{t=1}^T\bigotimes_{i\in[n]}\boldsymbol x_i^{(t)}\) be the average correlated distribution. In this case, we can write the external regret, which controls the CCE gap, as
\begin{equation}
 \barreg_i(\mu^{(T)})
 =\frac1T\max_{a_i'\in\mathcal A_i}\sum_{t=1}^T
 \left[u_{i}(a_i',\boldsymbol x_{-i}^{(t)})
 -u_i(\boldsymbol x^{(t)} )\right].
 \label{eq:learning-regret-identity}
\end{equation}

\subsubsection{Information-value-free no-regret dynamics}

We first consider the \emph{regret matching} algorithm~\citep{Hart00:Simple}, with a suitable tie-breaking rule when the regrets are nonpositive.

\paragraph{Regret matching.}
For each player \(i\in[n]\) and action \(a_i \in \mathcal{A}_i \), we define the cumulative regret up to round \(t\) as
\[
 r_{a_i}^{(t)}
 \defeq \sum_{\tau=1}^t
 \left[u_{i}(a_i, \boldsymbol x_{-i}^{(\tau)})
 -u_i(\boldsymbol x^{(\tau)})\right].
\]
We write \(\boldsymbol r_i^{(t)}=(r_{a_i}^{(t)})_{a_i\in\mathcal A_i}\) for the corresponding regret vector. By convention, we take $r_{a_i}^{(0)}=0$ for all actions $a_i \in \mathcal{A}_i$. On round \(t\), we define regret matching as
\begin{equation}
 x_{a_i}^{(t) }=
 \begin{cases}
 \displaystyle
 \frac{\pos{r_{a_i}^{(t-1)}}}
 {\sum_{a_i'\in\mathcal A_i}\pos{r_{a_i'}^{(t-1)}}}
 &\text{if }\sum_{a_i'\in\mathcal A_i} \pos{r_{a_i'}^{(t-1)}}>0,\\[2.5mm]
 \displaystyle
 \frac{\ind{a_i\in\argmaxop_{a_i'\in\mathcal A_i} r_{a_i'}^{(t-1)}}}
 {|\argmaxop_{a_i'\in\mathcal A_i} r_{a_i'}^{(t-1)}|}
 &\text{otherwise.}
 \end{cases}
 \label{eq:mixed-zero-safe-rm}
\end{equation}
In the latter case, it is common to allow the strategy to be arbitrary in regret matching, but here we need handle it more carefully. The proof of the upper bound is standard, but we include here because the underlying potential will play a key role in the $\PLS$ membership result.

\begin{proposition}[Regret matching]
\label{prop:zero-safe-rm}
At every horizon \(T\), regret matching~\eqref{eq:mixed-zero-safe-rm} satisfies
\[
 0\le\barreg_i(\mu^{(T)})\le\sqrt{m_i/T}.
\]
\end{proposition}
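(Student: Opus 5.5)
The plan has two parts. The upper bound comes from the usual Blackwell-type potential argument for regret matching. The lower bound comes from showing that the tie-breaking rule in~\eqref{eq:mixed-zero-safe-rm} keeps the maximal cumulative regret nonnegative at every round. By~\eqref{eq:learning-regret-identity}, $T\,\barreg_i(\mu^{(T)})=\max_{a_i}r_{a_i}^{(T)}$, so it suffices to show
\[
0\le \max_{a_i\in\mathcal A_i} r_{a_i}^{(T)}\le \sqrt{m_iT}.
\]
Throughout, write $g_{a_i}^{(t)}\defeq u_i(a_i,\boldsymbol x_{-i}^{(t)})-u_i(\boldsymbol x^{(t)})$ for the instantaneous regret. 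The key structural fact is $\sum_{a_i}x^{(t)}_{a_i}g^{(t)}_{a_i}=0$, since $u_i(\boldsymbol x^{(t)})=\sum_{a_i}x^{(t)}_{a_i}u_i(a_i,\boldsymbol x^{(t)}_{-i})$. Also $|g^{(t)}_{a_i}|\le 1$ because utilities lie in $[0,1]$.

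\emph{Upper bound.} Use the potential $\Phi^{(t)}\defeq\sum_{a_i}\pos{r^{(t)}_{a_i}}^2$. From $\pos{r+g}^2\le \pos{r}^2+2\pos{r}g+g^2$ we get
\[
\Phi^{(t)}\le \Phi^{(t-1)}+2\sum_{a_i}\pos{r^{(t-1)}_{a_i}}g^{(t)}_{a_i}+\sum_{a_i}(g^{(t)}_{a_i})^2 .
\]
\begin{itemize}
\item If $\sum_{a_i}\pos{r^{(t-1)}_{a_i}}>0$, the vector $\pos{\boldsymbol r^{(t-1)}_i}$ is proportional to $\boldsymbol x^{(t)}_i$. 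The cross term is then a positive multiple of $\sum_{a_i}x^{(t)}_{a_i}g^{(t)}_{a_i}=0$.
\item Otherwise $\pos{\boldsymbol r^{(t-1)}_i}=0$, so the cross term vanishes trivially.
\end{itemize}
In both cases $\Phi^{(t)}\le\Phi^{(t-1)}+m_i$, hence $\Phi^{(T)}\le m_iT$. Then $\max_{a_i}r^{(T)}_{a_i}\le \sqrt{\Phi^{(T)}}\le\sqrt{m_iT}$; if the maximum is negative the bound holds trivially. Dividing by $T$ gives the claimed upper bound.

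\emph{Lower bound.} I will prove by induction on $t$ that $M^{(t)}\defeq\max_{a_i}r^{(t)}_{a_i}\ge 0$. The base case is $M^{(0)}=0$. For the inductive step there are two cases.
\begin{itemize}
\item Some positive part is nonzero. Then $\boldsymbol x^{(t)}_i$ is supported on $S\defeq\{a_i: r^{(t-1)}_{a_i}>0\}$. Since the $\boldsymbol x^{(t)}_i$-weighted average of $g^{(t)}$ is zero, some $a_i\in S$ has $g^{(t)}_{a_i}\ge 0$. Hence $r^{(t)}_{a_i}\ge r^{(t-1)}_{a_i}>0$.
\item All regrets are nonpositive. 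Together with the induction hypothesis $M^{(t-1)}\ge0$, this forces $M^{(t-1)}=0$. The rule then plays uniformly on $\argmaxop_{a_i} r^{(t-1)}_{a_i}$, whose members all have $r^{(t-1)}_{a_i}=0$. Averaging again gives some $a_i$ in that argmax with $g^{(t)}_{a_i}\ge0$, so $r^{(t)}_{a_i}\ge 0$.
\end{itemize}
Thus $M^{(T)}\ge 0$, which gives $\barreg_i(\mu^{(T)})\ge0$.

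The only delicate step is the second case of the induction. This is exactly where the tie-breaking rule matters: an arbitrary strategy could put mass on actions with strictly negative cumulative regret, and then $M^{(t)}$ could drop below zero. The rest is routine.
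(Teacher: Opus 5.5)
Your proof is correct and follows essentially the same route as the paper. The upper bound uses the same potential $\Phi_i$, the inequality $\pos{r+s}^2\le\pos{r}^2+2\pos{r}\,s+s^2$, and the orthogonality $\sum_{a_i}\pos{r^{(t-1)}_{a_i}}g^{(t)}_{a_i}=0$; the lower bound is the same induction over the two branches of the rule. The only cosmetic difference is that the paper shows the $\boldsymbol x_i^{(t)}$-weighted average of $\boldsymbol r_i^{(t)}$ is nonnegative, whereas you exhibit a single supported action with $g^{(t)}_{a_i}\ge 0$, and both rest on the same two facts: the support lies where $r^{(t-1)}_{a_i}\ge 0$, and $\sum_{a_i}x^{(t)}_{a_i}g^{(t)}_{a_i}=0$.
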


\begin{proof}
We consider a player \(i \in [n]\). For the lower bound, we proceed by induction. The base case is \(\boldsymbol r_i^{(0)} = \boldsymbol{0} \). Suppose \(\max_{a_i\in\mathcal A_i} r_{a_i}^{(t-1)}\ge0\). We first prove that $\sum_{a_i \in \mathcal{A}_i} x_{a_i}^{(t)} r_{a_i}^{(t-1)} \ge 0$. In the first branch of
\eqref{eq:mixed-zero-safe-rm},
\[
 \sum_{a_i\in\mathcal A_i} x_{a_i}^{(t)}r_{a_i}^{(t-1)}
 =\frac{\sum_{a_i\in\mathcal A_i} \pos{r_{a_i}^{(t-1)}}^2}
 {\sum_{a_i\in\mathcal A_i}\pos{r_{a_i}^{(t-1)}}}>0.
\]
In the second branch, the inductive hypothesis forces the maximum cumulative gain to equal zero, so the same weighted sum equals zero. Moreover, if
\(g_{a_i}^{(t)}=u_{i}(a_i, \boldsymbol x_{-i}^{(t)})-u_i(\boldsymbol x_i^{(t)},\boldsymbol x_{-i}^{(t)})\), then
\(\sum_{a_i\in\mathcal A_i} x_{a_i}^{(t)}g_{a_i}^{(t)}=0\). Hence,
\[
 \sum_{a_i\in\mathcal A_i} x_{a_i}^{(t)}r_{a_i}^{(t)}
 = \sum_{a_i\in\mathcal A_i} x_{a_i}^{(t)}(r_{a_i}^{(t-1)} + g_{a_i}^{(t)})\ge0,
\]
which implies \(\max_{a_i\in\mathcal A_i} r_{a_i}^{(t)}\ge0\). This completes the induction.

For the upper bound, we consider the potential
\[
 \Phi_i(\boldsymbol r_i)
 \defeq\sum_{a_i\in\mathcal A_i}\pos{r_{a_i}}^2.
\]
Regret matching satisfies, by definition, the following orthogonality:
\[
 \sum_{a_i\in\mathcal A_i}\pos{r_{a_i}^{(t-1)}}g_{a_i}^{(t)}=0.
\]
Using the inequality \(\pos{r+s}^2\le\pos r^2+2\pos r\,s+s^2\) and the fact that \(|g_{a_i}^{(t)}|\le1\) (due to normalization), we obtain
\begin{align*}
 \Phi_i(\boldsymbol r_i^{(t)})
 &\le\Phi_i(\boldsymbol r_i^{(t-1)})
 +2\sum_{a_i\in\mathcal A_i}\pos{r_{a_i}^{(t-1)}}g_{a_i}^{(t)}
 +\sum_{a_i\in\mathcal A_i}(g_{a_i}^{(t)})^2\\
 &\le\Phi_i(\boldsymbol r_i^{(t-1)})+m_i.
\end{align*}
Since \(\Phi_i(\boldsymbol r_i^{(0)})=0\), the telescopic summation gives \(\Phi_i(\boldsymbol r_i^{(T)})\le m_iT\). Combining this with the lower bound and~\eqref{eq:learning-regret-identity} yields
\[
 0\le\barreg_i(\mu^{(T)})
 =\frac1T\max_{a_i\in\mathcal A_i}r_{a_i}^{(T)}
 \le\frac{\sqrt{\Phi_i(\boldsymbol r_i^{(T)})}}{T}
 \le\sqrt{m_i/T}.
\]
\end{proof}

In the local-search construction below, we will use the sum of these per-player potentials.

\paragraph{Nonnegativity of FTRL.} This nonnegativity property in fact holds for any follow-the-regularized-leader (FTRL) algorithm, so long as the regularizer remains fixed.\footnote{When the time horizon is not known in advance, one typically takes the learning rate to be time-dependent.}

\begin{restatable}[FTRL has nonnegative regret for fixed regularizers; \citealp{Guzman21:Best}]{proposition}{ftrlnonnegative}
\label{prop:ftrl-fptas}
For every horizon \(T\) and a fixed learning rate \(\eta_i>0\), FTRL satisfies $\barreg_i(\mu^{(T)}) \geq 0$.
\end{restatable}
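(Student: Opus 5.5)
The plan is to write FTRL in terms of the cumulative utility vector and prove the claim with a standard be-the-leader argument, applied in reverse. For player \(i\), let \(\boldsymbol g^{(t)} \defeq (u_i(a_i,\boldsymbol x_{-i}^{(t)}))_{a_i\in\mathcal A_i}\) be the utility vector on round \(t\), and let \(\boldsymbol G^{(t)}=\sum_{\tau\le t}\boldsymbol g^{(\tau)}\) with \(\boldsymbol G^{(0)}=\boldsymbol 0\). FTRL with a fixed strictly convex regularizer \(R\) on \(\Delta(\mathcal A_i)\) plays
\[
\boldsymbol x_i^{(t)}\in\argmaxop_{\boldsymbol x\in\Delta(\mathcal A_i)}\bigl\{\eta_i\inp{\boldsymbol x}{\boldsymbol G^{(t-1)}}-R(\boldsymbol x)\bigr\}.
\]
Define \(F(\boldsymbol G)\defeq\max_{\boldsymbol x\in\Delta}\{\eta_i\inp{\boldsymbol x}{\boldsymbol G}-R(\boldsymbol x)\}\). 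This is a convex function of \(\boldsymbol G\), being a maximum of affine functions, and by Danskin's theorem its gradient at \(\boldsymbol G^{(t-1)}\) is \(\eta_i\boldsymbol x_i^{(t)}\). By~\eqref{eq:learning-regret-identity},
\[
T\,\barreg_i(\mu^{(T)})=\max_{a_i}G^{(T)}_{a_i}-\sum_{t=1}^T\inp{\boldsymbol x_i^{(t)}}{\boldsymbol g^{(t)}}.
\]
It therefore suffices to show that \(\sum_t\inp{\boldsymbol x_i^{(t)}}{\boldsymbol g^{(t)}}\le\max_{a_i}G^{(T)}_{a_i}\).

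The key step is the convexity (first-order) inequality \(F(\boldsymbol G^{(t)})\ge F(\boldsymbol G^{(t-1)})+\eta_i\inp{\boldsymbol x_i^{(t)}}{\boldsymbol g^{(t)}}\). This holds because \(\eta_i\boldsymbol x_i^{(t)}\) is a subgradient of \(F\) at \(\boldsymbol G^{(t-1)}\), which follows directly from the definition of the max: plugging the fixed point \(\boldsymbol x_i^{(t)}\) into the objective at \(\boldsymbol G^{(t)}\) gives a lower bound on \(F(\boldsymbol G^{(t)})\). Telescoping over \(t=1,\dots,T\) yields
\[
\eta_i\sum_t\inp{\boldsymbol x_i^{(t)}}{\boldsymbol g^{(t)}}\le F(\boldsymbol G^{(T)})-F(\boldsymbol 0).
\]

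Next, bound the right-hand side. We have \(F(\boldsymbol G^{(T)})\le\eta_i\max_{a_i}G^{(T)}_{a_i}-\min_{\Delta}R\), since \(\inp{\boldsymbol x}{\boldsymbol G}\le\max_a G_a\) on the simplex. We also have \(F(\boldsymbol 0)=-\min_\Delta R\). The regularizer terms cancel exactly, which is precisely where a \emph{fixed} regularizer and learning rate are needed. This gives \(\sum_t\inp{\boldsymbol x_i^{(t)}}{\boldsymbol g^{(t)}}\le\max_{a_i}G^{(T)}_{a_i}\). Using \(u_i(\boldsymbol x^{(t)})=\inp{\boldsymbol x_i^{(t)}}{\boldsymbol g^{(t)}}\) (players mix independently) and dividing by \(T\), we obtain \(\barreg_i(\mu^{(T)})\ge0\).

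The main obstacle, though mild, is handling the subgradient claim rigorously when \(R\) is not differentiable or the argmax is not unique (for example, at the boundary with the entropy regularizer). The approach above avoids this by never using differentiability: it uses only that \(\boldsymbol x_i^{(t)}\) attains the max at \(\boldsymbol G^{(t-1)}\), via
\[
F(\boldsymbol G^{(t)})\ge\eta_i\inp{\boldsymbol x_i^{(t)}}{\boldsymbol G^{(t-1)}+\boldsymbol g^{(t)}}-R(\boldsymbol x_i^{(t)})=F(\boldsymbol G^{(t-1)})+\eta_i\inp{\boldsymbol x_i^{(t)}}{\boldsymbol g^{(t)}}.
\]
This argument works for any tie-breaking and any regularizer for which the maxima exist. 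This is the case for a lower semicontinuous \(R\) on the compact simplex.
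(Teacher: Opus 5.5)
Your proposal is correct and is essentially the paper's proof: the paper's potential \(V_i^{(t)}\) is exactly your \(F(\boldsymbol G^{(t)})/\eta_i\), and the paper likewise plugs \(\boldsymbol x_i^{(t)}\) into the round-\(t\) objective to get the one-step inequality, then telescopes. The only cosmetic difference is that the paper shifts the regularizer so that its minimum is zero (so \(V_i^{(0)}=0\)), whereas you keep \(\min_\Delta R\) and let it cancel between \(F(\boldsymbol G^{(T)})\) and \(F(\boldsymbol 0)\); as you note yourself, the Danskin remark is not needed.
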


For completeness, we provide the simple proof in~\Cref{app:fixed-step-ftrl}. We thus arrive at the following consequence.

\begin{corollary}[FPTAS]
\label{thm:ivfcce-fptas}
Consider a game with the polynomial expectation property, and let \(d=\sum_{i\in[n]}|\mathcal A_i|\). For every rational \(\epsilon\in(0,1]\), a deterministic algorithm running in time polynomial in the input length and \(1/\epsilon\) computes an $(\epsilon, 0)$-\(\IVFCCE\), with support at most \(T=\lceil d/\epsilon^2\rceil\).
\end{corollary}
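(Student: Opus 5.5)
The plan is to run the regret matching dynamics \eqref{eq:mixed-zero-safe-rm} for \(T=\lceil d/\epsilon^2\rceil\) rounds and output the uniform mixture \(\mu^{(T)}=T^{-1}\sum_{t=1}^T\bigotimes_i \boldsymbol x_i^{(t)}\). This is a polynomial mixture of product distributions, which is the output format adopted from \citet{Papadimitriou08:Computing}.

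Correctness follows directly from \Cref{prop:zero-safe-rm}. For every player \(i\),
\[
0\le\barreg_i(\mu^{(T)})\le\sqrt{m_i/T}\le\sqrt{d/T}\le\epsilon,
\]
since \(m_i\le d\) and \(T\ge d/\epsilon^2\). The upper bound means \(\mu^{(T)}\) is an \(\epsilon\)-CCE, and the lower bound means it is \(0\)-information-value-free. Together these give an \((\epsilon,0)\)-IVFCCE with support at most \(T\). The algorithm is deterministic, because the regret-matching update, including the tie-breaking branch, is a deterministic function of the cumulative regrets.

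The running time is the step that needs care. Each round requires the quantities \(u_i(a_i,\boldsymbol x_{-i}^{(t)})\) for all \(i\) and \(a_i\), plus \(u_i(\boldsymbol x^{(t)})=\sum_{a_i}x_{a_i}^{(t)}u_i(a_i,\boldsymbol x^{(t)}_{-i})\). That is \(d\) calls to the polynomial expectation oracle per round, on rational inputs. The concern is that the bit length of \(\boldsymbol x^{(t)}\) may grow across rounds, since it depends on all earlier mixed strategies. The clean route is to perform arithmetic exactly and bound bit sizes by induction over rounds. If this only yields a bound that is polynomial in \(T\) and the input length \(L\), then since \(T=\poly(d,1/\epsilon)\), the total time is polynomial in \(L\) and \(1/\epsilon\), which suffices for an FPTAS.

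If the bit sizes compound multiplicatively, I would instead purify. In each round, each player plays a pure action chosen deterministically so that the zero-drift orthogonality \(\sum_{a_i}[r_{a_i}^{(t-1)}]_+g_{a_i}^{(t)}\le 0\) used in the potential argument still holds. Under the tie-breaking rule this also keeps \(\max_{a_i} r_{a_i}^{(t)}\ge0\). This is the role the paper assigns to \Cref{lem:zero-drift-selection}. With pure profiles, every cumulative regret lies in \(\payoffdenom^{-1}\Z\) with magnitude at most \(T\), so all numbers stay polynomially bounded. As a fallback, FTRL via \Cref{prop:ftrl-fptas} also gives the nonnegativity, but exact rational FTRL iterates are awkward, so regret matching is the preferred choice. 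I expect this bit-complexity control to be the main obstacle; the regret bounds themselves come directly from \Cref{prop:zero-safe-rm}.
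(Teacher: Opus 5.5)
Your plan is the paper's: run regret matching with the zero-safe tie-breaking for \(T=\lceil d/\epsilon^2\rceil\) rounds and output the empirical distribution. You are also right that bit complexity is the crux. However, neither branch of your bit-complexity argument works as written.

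\textbf{The exact mixed-strategy branch fails.} The denominator of \(\boldsymbol x_j^{(t)}\) is the sum of player \(j\)'s positive regrets. The increments \(u_i(a_i,\boldsymbol x_{-i}^{(t)})-u_i(\boldsymbol x^{(t)})\) multiply the strategies of all players. So in general the common denominator of the regrets is raised to a power of order \(n\) every round, and its bit length grows geometrically in \(t\), already for two players. Purification is therefore necessary, not a fallback.

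\textbf{The purification step, as you state it, also fails.} You require each player's own drift \(\sum_{a_i}\pos{r_{i,a_i}}[u_i(a_i,\boldsymbol a'_{-i})-u_i(\boldsymbol a')]\) to be nonpositive. No single pure profile can guarantee this for all players at once. In matching pennies, take a state where each player has equal positive regret on both actions. Player 1's term is then nonpositive only if \(a_1'=a_2'\), and player 2's only if \(a_1'\neq a_2'\). \Cref{lem:zero-drift-selection} only makes the aggregate \(\ell_{\boldsymbol r}(\boldsymbol a')\), summed over players, nonpositive.

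\textbf{The fix is to use the summed potential.} Drop the per-player potential of \Cref{prop:zero-safe-rm} and work with \(\Phi\) from \eqref{eq:positive-potential}; this is what the statement's constants reflect.
\begin{itemize}
\item By \Cref{lem:zero-safe-step} and \Cref{prop:deterministic-zero-safe-trajectory}, the purified trajectory satisfies \(\Phi(\boldsymbol r^{(T)})\le d\payoffdenom^2T\).
\item Hence for every player, \(\barreg_i=\max_{a_i}r^{(T)}_{i,a_i}/(\payoffdenom T)\le\sqrt{\Phi(\boldsymbol r^{(T)})}/(\payoffdenom T)\le\sqrt{d/T}\le\epsilon\).
\item You get \(\sqrt{d/T}\) rather than the per-player \(\sqrt{m_i/T}\). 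That is exactly why the horizon is \(\lceil d/\epsilon^2\rceil\) rather than something depending on \(m_i\).
\item The lower bound \(\barreg_i\ge 0\) survives. The selected \(a_i'\) lies in \(\supp(\boldsymbol x_i(\boldsymbol r))\), so \(r_{i,a_i'}\ge0\), and the update leaves \(r_{i,a_i'}\) unchanged.
\item The output is the uniform distribution over the \(T\) selected pure profiles, so its support is at most \(T\) in the literal sense.
\item Each round runs in time polynomial in \(L\) and \(\log T\), because the scaled regrets stay integers bounded by \(T\payoffdenom\).
\end{itemize}
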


We will next leverage no-regret dynamics to establish a polynomial local search formulation.

\subsection{Complexity at exponential precision}
\label{subsec:ivfcce-binary}

When the desired accuracy is encoded in binary, the previous algorithm can take exponential time. This section examines the complexity of IVFCCEs in this high-accuracy regime. Our first result is computing an approximate IVFCCE is in $\PLS$ and thus in $\CLS = \PPAD \cap \PLS$ (\Cref{thm:ivfcce-binary}). We then show that computation of IVFCCE is at least as hard as the P-matrix Linear complementarity problem (LCP) (\Cref{theorem:PLCP}) and thus is unlikely to admit a polynomial-time algorithm.

\subsubsection{CLS membership}
\label{sec:CLS}

We begin with the positive result.

\begin{theorem}
\label{thm:ivfcce-binary}
Consider a game with the polynomial expectation property. Given a rational \(\epsilon \in \Q_{> 0} \), computing an \(\epsilon\)-\(\IVFCCE\) belongs to $\CLS = \PPAD \cap \PLS$.
\end{theorem}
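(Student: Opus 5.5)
Since $\CLS = \PPAD\cap\PLS$, I will establish membership in each class separately. For $\PPAD$, I would argue that an exact mixed Nash equilibrium $\boldsymbol x$ has $\barreg_i = 0$ for every player, so it is an exact IVFCCE. For games with the polynomial expectation property, computing an $\epsilon$-Nash equilibrium given $\epsilon$ in binary lies in $\PPAD$ (via a Brouwer/Nash-map formulation, as for polymatrix games in~\citet{FilosRatsikas24:PPAD}, or through the general succinct-game framework). However, an $\epsilon$-Nash equilibrium only guarantees $\barreg_i(\boldsymbol x)\in[0,\epsilon]$ when taking a product distribution, because the regret of a product profile is automatically nonnegative: the best fixed action is at least as good as the mixture. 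Hence an $\epsilon$-NE is already an $\epsilon$-IVFCCE, and $\PPAD$ membership follows. The remainder of the proof addresses $\PLS$, which is where the no-regret dynamics enter, as anticipated in the technical overview.

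For $\PLS$, I would build a local search instance out of a purified version of regret matching~\eqref{eq:mixed-zero-safe-rm}. A state will consist of a counter $t\le T$ with $T=\lceil d/\epsilon^2\rceil$, written in binary, the integer-scaled cumulative regret vectors $\boldsymbol r_i^{(t)}\in \payoffdenom^{-1}\Z^{m_i}$, and a compressed description of the average distribution $\mu^{(t)}$ as a polynomial-size mixture of product distributions. Purification is essential: instead of playing the mixed $\boldsymbol x_i^{(t)}$, each player deterministically selects a pure action satisfying the zero-drift condition $\sum_{a_i}\pos{r_{a_i}^{(t-1)}}g_{a_i}^{(t)}\le 0$ and preserving $\max_a r_{a}\ge 0$ (\Cref{lem:zero-drift-selection}). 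This keeps all regrets in $\payoffdenom^{-1}\Z$ with bounded magnitude $\le T$, so they fit in polynomially many bits. I intend to reuse the proof of \Cref{prop:zero-safe-rm}: the induction for nonnegativity and the bound $\Phi_i(\boldsymbol r^{(t)}_i)\le m_i t$ both go through with the pure choice. To obtain the average distribution, I would apply the constructive Carath\'eodory compression (\Cref{lem:rational-compression}), replacing the growing uniform mixture with a mixture of polynomially many pure profiles that reproduce every quantity we need, namely the expected utilities $\E[u_i(\boldsymbol a)]$ and deviation utilities $\E[u_i(a_i',\boldsymbol a_{-i})]$. This is a linear system of $\poly$ dimension, so the compressed weights remain rational of polynomial size, and the update step $t\mapsto t+1$ composes compression with adding the new profile with weight $1/(t+1)$.

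The potential will be the counter $t$, which increases by one in each step, and the neighborhood of a state is its successor. The feasible states are those that pass directly checkable invariants: $\max_{a_i} r_{a_i}\ge 0$, $\Phi_i(\boldsymbol r_i)\le m_i t$, and the consistency condition that the stored mixture $\mu$ satisfies $t\,\barreg$-entries equal to $\boldsymbol r_i$ coordinatewise, i.e.\ $t(\E_\mu[u_i(a_i',\boldsymbol a_{-i})]-\E_\mu[u_i])=r_{a_i'}$, all of which can be verified in polynomial time using the polynomial expectation property. A local optimum is either a state with $t=T$, for which the invariants give $0\le\barreg_i(\mu)\le\sqrt{m_i/T}\le\epsilon$, or a state at which the successor computation fails an invariant; the latter case is ruled out by the one-step analysis, since the proof of \Cref{prop:zero-safe-rm} only uses the current $\boldsymbol r_i$ and not the history.

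I expect the main obstacle to be precisely this history independence. Validity must be certified by the local invariants rather than by replaying exponentially many rounds, and the compression must preserve the exact identity between the stored regrets and the stored $\mu$ while keeping bit complexity polynomial. I would handle this by insisting that compression preserve exactly the finitely many linear functionals defining the regrets, and by keeping $t\mu$ with integer-compatible denominators. A defensive fallback is to drop exact consistency and instead certify $|\text{stored}-\text{true}|$ via the invariant itself, recomputing the regrets from $\mu$ at each step.
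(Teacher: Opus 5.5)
Your overall plan matches the paper's. You get $\PPAD$ membership from an $\epsilon$-Nash equilibrium, which as a product distribution automatically has $\barreg_i\ge 0$. You get $\PLS$ membership from purified regret matching with a binary counter, Carath\'eodory compression, locally checkable invariants, and potential $t$.

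\textbf{The gap.} The purification step fails as you state it. No pure profile can in general make \emph{each} player's drift $\sum_{a_i}\pos{r_{i,a_i}}\bigl[u_i(a_i,\boldsymbol a'_{-i})-u_i(\boldsymbol a')\bigr]$ nonpositive, because that drift depends on the other players' realized actions. Only its expectation under $\boldsymbol x(\boldsymbol r)$ vanishes player by player. The method of conditional expectations in \Cref{lem:zero-drift-selection} therefore controls only the \emph{sum} $\ell_{\boldsymbol r}$ over players, and your per-player invariant $\Phi_i(\boldsymbol r_i)\le m_i t$ is not preserved.

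\textbf{A concrete counterexample.} Take rock--paper--scissors where a win pays $1$ and ties and losses pay $0$. Let $t=9$ and let $\mu$ be uniform over the three ties.
\begin{itemize}
\item Then $r_{i,a_i}=3$ for every player and action, and $\Phi_i=27=m_i t$, so this state passes all your invariants.
\item At \emph{every} pure profile, some player (the loser, or both players at a tie) gains $+1$ on the action that beats the opponent's action.
\item That player's potential becomes $\Phi_i=9+9+16=34>30=m_i(t+1)$.
\end{itemize}
So this valid state has an invalid successor under \emph{any} pure selection rule. It is therefore a local optimum of your instance. Yet it does not decode to an $\epsilon$-IVFCCE, since its regrets equal $1/3$. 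Your claim that the one-step analysis rules such states out is false.

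\textbf{The repair.} Keep only the aggregate invariant $\Phi=\sum_i\Phi_i\le dt$ (in your unscaled units). This invariant \emph{is} preserved, because $\ell_{\boldsymbol r}(\boldsymbol a')\le 0$ and there are $d$ increments, each of magnitude at most $1$. In the example, the aggregate selection picks a non-tie, and $\Phi$ goes from $54\le 54$ to $51\le 60$. The regret bound then follows from $\max_{a_i}r_{i,a_i}\le\sqrt{\Phi}\le\sqrt{dt}$, which gives $\barreg_i(\mu)\le\sqrt{d/t}$ rather than $\sqrt{m_i/t}$. Since you already chose $T=\lceil d/\epsilon^2\rceil$ with $d=\sum_i m_i$, you still get $0\le\barreg_i(\mu)\le\epsilon$ at $t=T$. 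This is exactly the paper's invariant.

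\textbf{Nonnegativity.} The invariant $\max_{a_i}r_{i,a_i}\ge 0$ also does not come from rerunning the mixed-strategy induction of \Cref{prop:zero-safe-rm}. That induction uses the identity $\sum_{a_i}x_{a_i}g_{a_i}=0$, which fails for pure profiles. Instead, the selected $a'_i$ lies in the support of $\boldsymbol x_i(\boldsymbol r)$, so $r_{i,a'_i}\ge 0$, and that coordinate is unchanged by the update.

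\textbf{Compression.} Bound the compressed weights via Cramer's rule in terms of $(\boldsymbol r,t)$ rather than the history, as in \Cref{lem:rational-compression}.

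With these changes, your argument coincides with the paper's.
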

Membership in $\PPAD$ follows from existing results concerning Nash equilibria~\citep{Papadimitriou23:Computational}. Moreover, on account of the fact that $\CLS = \PPAD \cap \PLS$~\citep{Fearnley23:Complexity}, it suffices to establish membership in $\PLS$~\citep{Johnson88:How}. In particular, we need to establish that there is a discrete state space in which each state has polynomial encoding length, a polynomial-time algorithm for determining the next state from any given state, and a polynomial-time computable potential that strictly increases upon each transition unless the current state is a local optimum.

In what follows, if $L$ represents the number of bits needed to describe the input, we recall that we assume that there is an integer \(\payoffdenom\ge1\) encoded with \(\poly(L)\) bits such that \(u_i(\boldsymbol a')\in \payoffdenom^{-1} \Z \cap[0,1]\) for every player \(i\) and action profile \(\boldsymbol a'\). For each player \(i\), action \(a_i\in\mathcal A_i\), and action profile \(\boldsymbol a'\), the scaled payoff difference \(\payoffdenom[u_i(a_i,\boldsymbol a'_{-i})-u_i(\boldsymbol a')]\) is an integer in \([-\payoffdenom,\payoffdenom]\). In accordance with the proof in~\Cref{prop:zero-safe-rm}, for a scaled cumulative regret vector \(\boldsymbol r\in\Z^d\), we define the potential
\begin{equation}
 \Phi(\boldsymbol r)
 =\sum_{i\in[n]}\sum_{a_i\in\mathcal A_i}\pos{r_{i,a_i}}^2.
 \label{eq:positive-potential}
\end{equation}
Assuming the invariance \(\max_{a_i\in\mathcal A_i} r_{i,a_i}\ge0\) for every player, we use the shorthand notation \(\boldsymbol x_i(\boldsymbol r)\) for the (mixed) strategy prescribed by regret matching~\eqref{eq:mixed-zero-safe-rm} with cumulative regret coordinates \(r_{i,a_i}\). Due to scale invariance, scaling all regrets by a common positive factor \(\payoffdenom\) does not change the prescribed strategy.

The role of the next lemma (\Cref{lem:zero-drift-selection}) is to convert the mixed strategy obtained by regret matching into a pure strategy. This is essential for the $\PLS$ membership result, for otherwise the bit complexity of the strategy could blow up as the dynamics go along. To do so, we let \(\boldsymbol x(\boldsymbol r)=(\boldsymbol x_i(\boldsymbol r))_i\). We want to analyze what happens when an action profile is drawn from $\boldsymbol x(\boldsymbol r)$. It will be helpful to define the term
\begin{equation}
 \ell_{\boldsymbol r}(\boldsymbol a')
 =2\payoffdenom\sum_{i\in[n]}\sum_{a_i\in\mathcal A_i}
 \pos{r_{i,a_i}}\bigl[u_i(a_i,\boldsymbol a'_{-i})-u_i(\boldsymbol a')\bigr].
 \label{eq:linear-drift}
\end{equation}
(Above, the multiplication with $B$ only serves to make the expression integral.) To avoid randomization, we observe that there is a way to deterministically select a pure strategy profile while maintaining the orthogonality invariance, which is central in the regret analysis of regret matching.

\begin{lemma}[Deterministic purification]
\label{lem:zero-drift-selection}
For every regret vector \(\boldsymbol r\),
\(\E_{\boldsymbol a \sim\boldsymbol x(\boldsymbol r)}[\ell_{\boldsymbol r}(\boldsymbol a)]=0\). Moreover, one can compute in polynomial time a pure profile \(\boldsymbol a'(\boldsymbol r)\) in the support of \(\boldsymbol x(\boldsymbol r)\) such that
\(\ell_{\boldsymbol r}(\boldsymbol a'(\boldsymbol r))\le0\).
\end{lemma}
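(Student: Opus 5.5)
The first claim follows from linearity of expectation and the orthogonality property already used in the proof of \Cref{prop:zero-safe-rm}. Under the product distribution \(\boldsymbol x(\boldsymbol r)\), each summand of \(\ell_{\boldsymbol r}\) satisfies \(\E_{\boldsymbol a\sim\boldsymbol x(\boldsymbol r)}[u_i(a_i,\boldsymbol a'_{-i})-u_i(\boldsymbol a')]=u_i(a_i,\boldsymbol x_{-i})-u_i(\boldsymbol x)=g_{a_i}\). Hence
\[
\E[\ell_{\boldsymbol r}]=2\payoffdenom\sum_i\sum_{a_i}\pos{r_{i,a_i}}\,g_{a_i}.
\]
I will check that each player's inner sum vanishes, considering the two branches of \eqref{eq:mixed-zero-safe-rm}. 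In the first branch, \(\pos{r_{i,\cdot}}\) is a positive multiple of \(\boldsymbol x_i\), and \(\sum_{a_i}x_{a_i}g_{a_i}=0\). In the second branch, all \(\pos{r_{i,a_i}}=0\), so the inner sum is trivially zero.

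For the purification I will use the method of conditional expectations, fixing one player at a time. Order the players \(1,\dots,n\) and maintain a partial profile \((a'_1,\dots,a'_{j-1})\) with the invariant that
\[
\Psi_{j-1}\defeq\E_{\boldsymbol a_{\ge j}\sim\bigotimes_{l\ge j}\boldsymbol x_l(\boldsymbol r)}\bigl[\ell_{\boldsymbol r}(a'_1,\dots,a'_{j-1},\boldsymbol a_{\ge j})\bigr]\le 0.
\]
The base case is \(\Psi_0=0\), which is the first claim. Since \(\Psi_{j-1}\) is the \(\boldsymbol x_j(\boldsymbol r)\)-weighted average of the values obtained by fixing \(a_j\) to each action in the support of \(\boldsymbol x_j(\boldsymbol r)\), at least one such support action keeps the conditional expectation \(\le\Psi_{j-1}\le 0\). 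Choosing it, with ties broken by a fixed ordering of actions to keep the map deterministic, preserves the invariant. After \(n\) steps I obtain a pure profile \(\boldsymbol a'(\boldsymbol r)\) in the support of \(\boldsymbol x(\boldsymbol r)\) with \(\ell_{\boldsymbol r}(\boldsymbol a'(\boldsymbol r))\le 0\).

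The main obstacle is computing these conditional expectations exactly in polynomial time. I intend to handle it with the polynomial expectation property, which must be applied to mixtures in which some players are fixed to pure actions, given as point-mass distributions, and the rest play \(\boldsymbol x_l(\boldsymbol r)\). Each conditional expectation is a sum of \(\sum_i|\mathcal A_i|\) terms, each weighted by \(\pos{r_{i,a_i}}\), of the form \(\E[u_i(a_i,\boldsymbol a_{-i})]-\E[u_i(\boldsymbol a)]\) under a product distribution. The first expectation is exactly what the polynomial expectation property provides. The second reduces to the first by summing over \(a_i\) weighted by player \(i\)'s own marginal, which is either a point mass or \(\boldsymbol x_i(\boldsymbol r)\). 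Finally, I need the strategies \(\boldsymbol x_i(\boldsymbol r)\) to have polynomial bit length. They are ratios of integers from \(\boldsymbol r\), or uniform distributions over argmax sets, so this holds provided \(\boldsymbol r\) itself has polynomial encoding, as it does in the state space. Overall this takes \(O(\sum_i|\mathcal A_i|)\) candidate evaluations, each polynomial, which gives the claimed polynomial running time.
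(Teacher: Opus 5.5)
Your proposal is correct and follows essentially the same route as the paper: you prove the zero-expectation claim via independence and the two branches of regret matching (positive parts proportional to $\boldsymbol x_i(\boldsymbol r)$, or identically zero), and you purify via the method of conditional expectations, fixing players in order, picking the first support action that does not increase the conditional expectation, and evaluating these expectations with the polynomial expectation property. Your added details, namely recovering $\E[u_i(\boldsymbol a)]$ by averaging the oracle values over player $i$'s own marginal and noting that $\boldsymbol x_i(\boldsymbol r)$ has polynomial bit length, fill in steps the paper leaves implicit.
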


\begin{proof}
We fix a player \(i\). If all coordinates of the regret vector are nonpositive, the overall contribution is zero. Otherwise, independence and \eqref{eq:linear-drift} show that the expected contribution (divided by \(2\payoffdenom\)) is again zero by the definition of the first branch in~\eqref{eq:mixed-zero-safe-rm}. Summing over players proves the claim.

For the deterministic selection, we consider the players in a fixed order. Before fixing player \(i\), the current conditional expectation is a weighted average of the values obtained by fixing each action in its support. We choose, for example, the lexicographically first action that does not increase this expectation. The product-expectation oracle evaluates these conditional expectations exactly, with already processed players selecting pure strategies. After all players are fixed, the resulting pure profile satisfies \(\ell_{\boldsymbol r}(\boldsymbol a'(\boldsymbol r)) \le \E_{\boldsymbol a \sim\boldsymbol x(\boldsymbol r)}[\ell_{\boldsymbol r}(\boldsymbol a)] = 0 \).
\end{proof}

The next lemma tracks the one-step update of the regret vector and the potential function under the above deterministic selection. A minor detail here is that the regret vector is scaled by $B$ to guarantee integrality; this not affect the actual mixed strategy due to scale invariance.

\begin{lemma}[One-step invariants]
\label{lem:zero-safe-step}
Suppose \(\max_{a_i\in\mathcal A_i} r_{i,a_i}\ge0\) for all \(i\), \(\Phi(\boldsymbol r)\le d\payoffdenom^2t\), and \(|r_{i,a_i}|\le t\payoffdenom\). Let \(\boldsymbol a'=\boldsymbol a'(\boldsymbol r)\) be given by \cref{lem:zero-drift-selection} and set $r'_{i,a_i}=r_{i,a_i}+\payoffdenom [u_i(a_i,\boldsymbol a'_{-i})-u_i(\boldsymbol a') ]$ for every $i$ and $a_i \in \mathcal{A}_i$. Then $\max_{a_i\in\mathcal A_i} r'_{i,a_i} \ge 0$ for all $i \in [n]$, $\Phi(\boldsymbol r')\le d\payoffdenom^2(t+1)$, and $|r'_{i,a_i}|\le(t+1)\payoffdenom$.
\end{lemma}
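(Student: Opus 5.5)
We prove the three invariants separately, mirroring the analysis of \Cref{prop:zero-safe-rm} but with the pure profile \(\boldsymbol a'\) from \Cref{lem:zero-drift-selection} in place of the mixed profile. Throughout, write \(g_{i,a_i}\defeq\payoffdenom[u_i(a_i,\boldsymbol a'_{-i})-u_i(\boldsymbol a')]\in\Z\cap[-\payoffdenom,\payoffdenom]\), so that \(r'_{i,a_i}=r_{i,a_i}+g_{i,a_i}\).

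\emph{Magnitude bound.} This is immediate: \(|r'_{i,a_i}|\le|r_{i,a_i}|+|g_{i,a_i}|\le t\payoffdenom+\payoffdenom\).

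\emph{Potential bound.} We use the same pointwise inequality \(\pos{r+s}^2\le\pos r^2+2\pos r\,s+s^2\) as in \Cref{prop:zero-safe-rm}, applied coordinatewise. Summing over all players and actions gives
\[
\Phi(\boldsymbol r')\le\Phi(\boldsymbol r)+2\sum_{i}\sum_{a_i}\pos{r_{i,a_i}}g_{i,a_i}+\sum_i\sum_{a_i}g_{i,a_i}^2 .
\]
The middle term is exactly \(\ell_{\boldsymbol r}(\boldsymbol a')\) from \eqref{eq:linear-drift}, which is \(\le0\) by \Cref{lem:zero-drift-selection}. The last term is at most \(d\payoffdenom^2\). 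Hence \(\Phi(\boldsymbol r')\le d\payoffdenom^2 t+d\payoffdenom^2\). This step is where deterministic purification replaces the exact orthogonality of the mixed dynamics. It only requires the aggregate drift over all players to be nonpositive, not the drift of each player separately.

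\emph{Nonnegativity of the maximum regret.} This is the step I expect to be the main obstacle. The argument in \Cref{prop:zero-safe-rm} used \(\sum_{a_i}x_{a_i}g_{a_i}=0\) under the \emph{mixed} opponent profile, whereas here \(g\) is evaluated at a pure profile. The fix is to use that \(a'_i\) lies in the support of \(\boldsymbol x_i(\boldsymbol r)\), which \Cref{lem:zero-drift-selection} guarantees, and to note that \(g_{i,a'_i}=0\) trivially because the deviation equals the played action. It then suffices to show \(r_{i,a'_i}\ge0\), which gives \(r'_{i,a'_i}=r_{i,a'_i}\ge0\). We split into the two branches of \eqref{eq:mixed-zero-safe-rm}:
\begin{itemize}
\item In the first branch, the support consists of actions with \(\pos{r_{i,a_i}}>0\), so \(r_{i,a'_i}>0\).
\item In the second branch, all regrets are nonpositive. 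The hypothesis \(\max_{a_i} r_{i,a_i}\ge0\) then forces the maximum to equal \(0\). The support is the argmax set, so \(r_{i,a'_i}=0\).
\end{itemize}
In either case \(\max_{a_i}r'_{i,a_i}\ge r'_{i,a'_i}\ge0\). This gives a cleaner argument than the weighted-sum induction, and it relies only on the support property of the purified profile.
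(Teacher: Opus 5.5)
Your proposal is correct and follows essentially the same route as the paper. The paper likewise gets \(r'_{i,a_i'}=r_{i,a_i'}\ge0\) from the support property of the purified profile and the zero self-deviation gain, bounds the potential via the coordinatewise inequality with \(\ell_{\boldsymbol r}(\boldsymbol a')\le0\) and \(d\) increments of size at most \(\payoffdenom\), and gets the magnitude bound from the triangle inequality; your explicit case split over the two branches of \eqref{eq:mixed-zero-safe-rm} simply spells out the step the paper states in one line.
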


\begin{proof}
Because \(a_i'\in\supp(\boldsymbol x_i(\boldsymbol r))\), definition~\eqref{eq:mixed-zero-safe-rm} gives \(r_{i,a_i'}\ge0\). Deviating from \(a_i'\) to itself has zero gain, so
\(r'_{i,a_i'}=r_{i,a_i'} \ge 0\).

For the potential upper bound, we use the inequality
\(\pos{r+s}^2\le\pos r^2+2\pos r\,s+s^2\) to every coordinate. There are \(d\) increments, each of magnitude at most \(\payoffdenom\), and their linear terms sum to
\(\ell_{\boldsymbol r}(\boldsymbol a')\le0\). Therefore,
\begin{align*}
 \Phi(\boldsymbol r') \le\Phi(\boldsymbol r)+\ell_{\boldsymbol r}(\boldsymbol a')
 +\payoffdenom^2\sum_{i\in[n]}\sum_{a_i\in\mathcal A_i}\bigl[u_i(a_i,\boldsymbol a'_{-i})-u_i(\boldsymbol a')\bigr]^2 \le d\payoffdenom^2(t+1).
\end{align*}
The final bound follows from the triangle inequality.
\end{proof}

Inductively applying~\cref{lem:zero-safe-step}, we arrive at the following invariances.

\begin{proposition}
\label{prop:deterministic-zero-safe-trajectory}
Starting from \(\boldsymbol r^{(0)}=\boldsymbol0\), we iteratively choose
\(\boldsymbol a^{\prime (t+1)}=\boldsymbol a'(\boldsymbol r^{(t)})\) and set $r_{i,a_i}^{(t+1)}=r_{i,a_i}^{(t)}
 +\payoffdenom [u_i(a_i,\boldsymbol a_{-i}^{\prime (t+1)})-u_i(\boldsymbol a^{\prime (t+1)}) ]$ for every $i$ and $a_i\in\mathcal A_i$. For every \(t\ge1\), we have $\max_{a_i\in\mathcal A_i} r_{i,a_i}^{(t)} \ge 0$ for all $i$, $\Phi(\boldsymbol r^{(t)})\le d\payoffdenom^2t$, and $|r_{i,a_i}^{(t)}|\le t\payoffdenom$.
\end{proposition}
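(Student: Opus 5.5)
The plan is a straightforward induction on \(t\), using \Cref{lem:zero-safe-step} as the inductive step. The only subtlety is the base case: the statement is for \(t\ge1\), but it is cleanest to verify the three invariants already at \(t=0\), where \(\boldsymbol r^{(0)}=\boldsymbol 0\). Then \(\max_{a_i\in\mathcal A_i} r^{(0)}_{i,a_i}=0\ge0\) for every player, \(\Phi(\boldsymbol 0)=0\le d\payoffdenom^2\cdot 0\), and \(|r^{(0)}_{i,a_i}|=0\le 0\cdot\payoffdenom\). So the hypotheses of \Cref{lem:zero-safe-step} hold with \(t=0\).

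For the inductive step, suppose the three invariants hold for \(\boldsymbol r^{(t)}\) at index \(t\). The update in the statement is exactly the one in \Cref{lem:zero-safe-step}: take \(\boldsymbol a'=\boldsymbol a'(\boldsymbol r^{(t)})\) from \Cref{lem:zero-drift-selection}, and add \(\payoffdenom[u_i(a_i,\boldsymbol a'_{-i})-u_i(\boldsymbol a')]\) to each coordinate. The lemma therefore gives the three invariants for \(\boldsymbol r^{(t+1)}\) with \(t+1\) in place of \(t\), which closes the induction. The claim for all \(t\ge1\) then follows in particular.

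One point needs checking along the way: that \(\boldsymbol a'(\boldsymbol r^{(t)})\) is well defined at each step. The regret-matching strategy \(\boldsymbol x_i(\boldsymbol r)\) in \eqref{eq:mixed-zero-safe-rm}, and hence the selection in \Cref{lem:zero-drift-selection}, was introduced under the invariant \(\max_{a_i} r_{i,a_i}\ge0\). The first invariant of the induction supplies exactly this, so the construction never leaves the domain where these objects are defined. Integrality of \(\boldsymbol r^{(t)}\) is also preserved, since each increment is an integer in \([-\payoffdenom,\payoffdenom]\). This is not needed for the three bounds, but it keeps the state discrete for the later \(\PLS\) argument.

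I expect no step to be a genuine obstacle: all the substantive work sits in \Cref{lem:zero-drift-selection} and \Cref{lem:zero-safe-step}. The only care required is in aligning the time indices (base case \(t=0\) versus the claim for \(t\ge1\)) and in confirming that the well-definedness precondition of the purification step is carried by the induction hypothesis.
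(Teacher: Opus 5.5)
Your proposal is correct and matches the paper's argument: the paper obtains this proposition by applying \Cref{lem:zero-safe-step} inductively from $\boldsymbol r^{(0)}=\boldsymbol 0$. Your explicit check of the $t=0$ base case and of the invariant $\max_{a_i} r_{i,a_i}\ge 0$ carried through each purification step simply spells out that induction in full.
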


Crucially, each round in the above procedure takes polynomial time. Now, for a problem to be in $\PLS$, we cannot output an exponentially long trajectory. Instead, we need a way to express the output succinctly. To do so, we use the following constructive form of Carath\'eodory's theorem.

\begin{lemma}[Trajectory compression]
\label{lem:rational-compression}
Given rational vectors \(\boldsymbol v^{(1)},\ldots,\boldsymbol v^{(K)}\in\Q^d\) and rational weights \(\boldsymbol\lambda\in\Delta([K])\), one can compute in polynomial time weights \(\widehat{\boldsymbol\lambda}\) supported on at most \(d+1\) indices such that
\(\sum_r\widehat\lambda_r\boldsymbol v^{(r)}=\sum_r\lambda_r\boldsymbol v^{(r)}\). Moreover, if the vectors are integral and bounded by \(\payoffdenom\), and their average is \(\boldsymbol r/t\) for \(\boldsymbol r\in[-t\payoffdenom,t\payoffdenom]^d\cap\Z^d\), each nonzero weight can be written with numerator and denominator of absolute value at most $D^{(t)} \defeq t(d+1)^{d+1}(\payoffdenom+1)^{d+1}.$
\end{lemma}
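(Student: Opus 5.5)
The first claim is the constructive Carathéodory step. Any vertex of $P=\{\boldsymbol\lambda'\ge 0:\sum_r\lambda'_r\boldsymbol v^{(r)}=\sum_r\lambda_r\boldsymbol v^{(r)},\ \sum_r\lambda'_r=1\}$ has at most $d+1$ nonzero coordinates. We have $d+1$ equality constraints, so a basic feasible solution of $P$ has support of size at most $d+1$. To get there in polynomial time, I will start from $\boldsymbol\lambda$ and repeat the following move. While the support $S$ has more than $d+1$ indices, the vectors $(\boldsymbol v^{(r)},1)_{r\in S}\in\Q^{d+1}$ are linearly dependent. Solve a linear system by Gaussian elimination to get a nonzero rational $\boldsymbol\beta$ supported on $S$ with $\sum_r\beta_r(\boldsymbol v^{(r)},1)=\boldsymbol 0$. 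Then move $\boldsymbol\lambda\leftarrow\boldsymbol\lambda-\theta\boldsymbol\beta$ with the largest $\theta>0$ that keeps all weights nonnegative; the sign is chosen so that some $\beta_r>0$, which exists because $\sum_r\beta_r=0$. This keeps the weighted sum and the total mass fixed and zeroes at least one coordinate. After at most $K$ rounds the support has at most $d+1$ indices.

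The intermediate weights can grow in bit length across rounds. To avoid this I will not use them as the final output. Once the final support $S$ with $|S|\le d+1$ is known, recompute $\widehat{\boldsymbol\lambda}$ as the unique solution of a nonsingular square subsystem. Concretely, extract a maximal linearly independent subset of $\{(\boldsymbol v^{(r)},1)\}_{r\in S}$; the positive solution restricted to this subset is still valid. Then solve
\[
\sum_{r\in S'}\widehat\lambda_r(\boldsymbol v^{(r)},1)=(\boldsymbol r/t,1),
\]
which has the same data as the original problem. Choose $|S'|$ linearly independent rows to get a nonsingular square system. Its unique solution equals the positive solution we already have, so nonnegativity is inherited. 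All of this runs in polynomial time with the usual bit-length guarantees of Gaussian elimination.

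For the quantitative bound, multiply the coordinate rows by $t$ so that the right-hand side is the integer vector $(\boldsymbol r,t)$ and the matrix has integer entries bounded by $\payoffdenom$, with the $1$'s in the mass row. Solve the square system of size $s\le d+1$ by Cramer's rule:
\[
\widehat\lambda_r=\frac{\det M_r}{t\det M},
\]
where the factor $t$ comes from scaling. Both determinants are integers. By Hadamard's inequality, or just the trivial bound $s!\,\prod(\text{row maxima})$, we have $|\det M|\le (d+1)^{d+1}(\payoffdenom+1)^{d+1}$. For $|\det M_r|$, the replaced column has entries up to $t\payoffdenom$ and $t$; expanding along that column pulls out a single factor $t$, which gives $|\det M_r|\le t(d+1)^{d+1}(\payoffdenom+1)^{d+1}$. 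Dividing out the common $t$, each weight has numerator and denominator bounded by $D^{(t)}$, possibly after cancelling.

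The main obstacle is exactly this accounting: keeping the $t$-dependence linear rather than polynomial of degree $d$. It rests on $t$ appearing only in the right-hand-side column. I expect that getting the constant to match $D^{(t)}$ will need a careful Laplace expansion along the replaced column combined with Hadamard's bound on the remaining minor.
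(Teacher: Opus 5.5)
The overall route matches the paper's: reach a vertex of $\{\boldsymbol\theta\ge 0:\sum_r\theta_r=1,\ \sum_r\theta_r\boldsymbol v^{(r)}=\boldsymbol v\}$, re-solve a square nonsingular subsystem, and bound the Cramer quotients by Hadamard's inequality, with $t$ entering through a single column. The paper simply obtains a vertex in polynomial time; your explicit Carath\'eodory elimination also works.

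\textbf{The gap.} One step fails as written. Your loop stops once $|S|\le d+1$. That does not make the augmented columns $(\boldsymbol v^{(r)},1)_{r\in S}$ linearly independent. When they are dependent, the claim that the positive solution ``restricted'' to a maximal linearly independent subset $S'$ is still valid is false: the mass on $S\setminus S'$ is dropped, and the unique solution of the re-solved square system on $S'$ can be negative. For example, take $d=2$, $\boldsymbol v^{(1)}=(0,0)$, $\boldsymbol v^{(2)}=(1,0)$, $\boldsymbol v^{(3)}=(2,0)$ with weights $(0.6,0.3,0.1)$, so the mean is $(1/2,0)$. Here $|S|=3=d+1$, so the loop never fires. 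Choosing $S'=\{2,3\}$ yields $\widehat\lambda_2=3/2$ and $\widehat\lambda_3=-1/2$, which are not valid weights.

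\textbf{The repair.} Keep applying your elimination move until the support columns are linearly independent. Any nonzero dependence satisfies $\sum_r\beta_r=0$, so it has a positive entry and the move always applies. This is exactly the vertex (affinely independent support) property the paper relies on. It gives $|S|\le d+1$ automatically, and the square subsystem's unique solution is then the positive one you already have.

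\textbf{Smaller points.}
\begin{itemize}
\item The intermediate bit growth you worry about is harmless. Scale $\boldsymbol\beta$ to an integer kernel vector with entries bounded by a subdeterminant bound $\Delta$. Each round then multiplies the common denominator by at most $\Delta$, while the weights stay in $[0,1]$, so over at most $K$ rounds the bit length grows only additively.
\item To get right-hand side $(\boldsymbol r,t)$ with the unscaled integer matrix, scale the unknowns ($\boldsymbol\mu=t\widehat{\boldsymbol\lambda}$), equivalently every row including the mass row. Scaling only the coordinate rows puts $t$ into the matrix. Your formula $\widehat\lambda_r=\det M_r/(t\det M)$ is the correct result of the former.
\item The accounting you flag as the main obstacle is immediate. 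The replaced column has Euclidean norm at most $\sqrt{s}\,t(\payoffdenom+1)$ and every other column at most $\sqrt{s}(\payoffdenom+1)$. Hadamard then gives $|\det M_r|$ and $|t\det M|$ both at most $t\,s^{s/2}(\payoffdenom+1)^{s}\le D^{(t)}$ for $s\le d+1$, with no cancellation needed. This is the paper's one-line bound.
\end{itemize}
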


\begin{proof}
Let \(\boldsymbol v=\sum_r\lambda_r\boldsymbol v^{(r)}\). We consider the polytope of nonnegative weights \(\boldsymbol\theta\) satisfying
\(\sum_r\theta_r=1\) and \(\sum_r\theta_r\boldsymbol v^{(r)}=\boldsymbol v\). A rational vertex of this polytope can be found in polynomial time. If a feasible vertex was supported on more than \(d+1\) vectors, the corresponding augmented columns \((1,\boldsymbol v^{(r)})\in\Q^{d+1}\) would be linearly dependent, contradicting extremality.

On an affinely independent support of size \(r\le d+1\), we choose \(r\) linearly independent rows of the augmented columns \((1,\boldsymbol v^{(r)})\). The positive weights are the unique solution of the resulting \(r\)-by-\(r\) system with the corresponding entries of \((1,\boldsymbol r/t)\) on the right. Applying Cramer's rule and using Hadamard's inequality, it follows that every numerator and denominator is bounded by
\(t\,r^{r/2}(\payoffdenom+1)^r\le D^{(t)}\), as claimed.
\end{proof}

Now, if a sparse distribution \(\mu^{(t)}\) satisfies
\[
 t\payoffdenom\E_{\boldsymbol a'\sim\mu^{(t)}}
 \bigl[u_i(a_i,\boldsymbol a'_{-i})-u_i(\boldsymbol a')\bigr]=r_{i,a_i}^{(t)}
\]
for every $i$ and $a_i \in \mathcal{A}_i$, we form the distribution
\[
 \widetilde{\mu}^{(t+1)}
 =\frac{t}{t+1}\mu^{(t)}
 +\frac1{t+1} \delta_{\boldsymbol a^{\prime (t+1)}},
\]
where $\delta_{\boldsymbol a^{\prime (t+1)}}$ is the singleton supported on $a^{\prime (t+1)}$, and apply \cref{lem:rational-compression} to the vectors $( \payoffdenom\bigl[u_i(a_i,\boldsymbol a'_{-i})-u_i(\boldsymbol a')\bigr] )_{i,a_i}$ associated with its support profiles \(\boldsymbol a'\).

\begin{proposition}[Compressed trajectory]
\label{prop:compressed-trajectory}
For every \(t\ge1\), the preceding deterministic procedure constructs \(\boldsymbol r^{(t)}\in\Z^d\) and an explicit rational distribution \(\mu^{(t)}\) such that
\begin{equation}
 \begin{gathered}
 t\payoffdenom\E_{\boldsymbol a'\sim\mu^{(t)}}
 \bigl[u_i(a_i,\boldsymbol a'_{-i})-u_i(\boldsymbol a')\bigr]=r_{i,a_i}^{(t)}
 \quad\text{for every }i\text{ and }a_i\in\mathcal A_i,\\
 \max_{a_i\in\mathcal A_i} r_{i,a_i}^{(t)}\ge0\ \ \forall i,
 \quad
 \Phi(\boldsymbol r^{(t)})\le d\payoffdenom^2t,
 \quad
 |\supp(\mu^{(t)})|\le d+1.
 \end{gathered}
 \label{eq:trajectory-invariants}
\end{equation}
Every support weight has numerator and denominator bounded by \(D^{(t)}\) (\Cref{lem:rational-compression}), and each step can be implemented in polynomial time.
\end{proposition}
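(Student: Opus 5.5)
The proof is by induction on \(t\). At every step we keep the regret invariants of \cref{prop:deterministic-zero-safe-trajectory} and the linear \emph{consistency identity}
\[
 t\payoffdenom\E_{\boldsymbol a'\sim\mu^{(t)}}\bigl[u_i(a_i,\boldsymbol a'_{-i})-u_i(\boldsymbol a')\bigr]=r^{(t)}_{i,a_i}.
\]
We restore sparsity after every step with \cref{lem:rational-compression}.

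\textbf{Base case.} For \(t=1\), let \(\mu^{(1)}=\delta_{\boldsymbol a'^{(1)}}\), where \(\boldsymbol a'^{(1)}=\boldsymbol a'(\boldsymbol 0)\). The update rule of \cref{prop:deterministic-zero-safe-trajectory} gives \(r^{(1)}_{i,a_i}=\payoffdenom[u_i(a_i,\boldsymbol a'^{(1)}_{-i})-u_i(\boldsymbol a'^{(1)})]\), which is exactly the identity with \(t=1\). The support has size \(1\le d+1\), and the single weight \(1\) trivially satisfies the bound.

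\textbf{Inductive step.} Assume \eqref{eq:trajectory-invariants} holds at time \(t\).

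\emph{Regret invariants.} \cref{lem:zero-drift-selection} computes \(\boldsymbol a'^{(t+1)}\) in polynomial time. We only need the marginals of the regret-matching strategy \(\boldsymbol x(\boldsymbol r^{(t)})\), and these are rational functions of the integer vector \(\boldsymbol r^{(t)}\) with polynomially many bits. \cref{lem:zero-safe-step} then yields \(\max_{a_i} r^{(t+1)}_{i,a_i}\ge 0\), the bound \(\Phi(\boldsymbol r^{(t+1)})\le d\payoffdenom^2(t+1)\), and \(|r^{(t+1)}_{i,a_i}|\le (t+1)\payoffdenom\).

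\emph{Consistency before compression.} Let \(\widetilde\mu^{(t+1)}=\frac{t}{t+1}\mu^{(t)}+\frac1{t+1}\delta_{\boldsymbol a'^{(t+1)}}\). Expectations are linear in the distribution, so multiplying by \((t+1)\payoffdenom\) gives
\[
 (t+1)\payoffdenom\E_{\widetilde\mu^{(t+1)}}[\cdot]=r^{(t)}_{i,a_i}+\payoffdenom\bigl[u_i(a_i,\boldsymbol a'^{(t+1)}_{-i})-u_i(\boldsymbol a'^{(t+1)})\bigr]=r^{(t+1)}_{i,a_i}.
\]

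\emph{Compression.} Index the vectors \(\boldsymbol v(\boldsymbol a')=(\payoffdenom[u_i(a_i,\boldsymbol a'_{-i})-u_i(\boldsymbol a')])_{i,a_i}\in\Z^d\) by the at most \(d+2\) support profiles of \(\widetilde\mu^{(t+1)}\). Each entry is bounded by \(\payoffdenom\) in absolute value, and their \(\widetilde\mu^{(t+1)}\)-weighted average is \(\boldsymbol r^{(t+1)}/(t+1)\), with \(\boldsymbol r^{(t+1)}\in[-(t+1)\payoffdenom,(t+1)\payoffdenom]^d\cap\Z^d\) by the regret invariants. Apply \cref{lem:rational-compression} and set \(\mu^{(t+1)}\) to the resulting weights. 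The lemma preserves the weighted average exactly, so the consistency identity carries over. It also gives support size at most \(d+1\) and every numerator and denominator at most \(D^{(t+1)}\). Note that the lemma only requires rational input weights; the new weights \(\frac{t}{t+1}\widehat\lambda\) and \(\frac1{t+1}\) are rational with bit length polynomial in \(\log t\), \(d\), \(\log\payoffdenom\).

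\textbf{Main obstacle: polynomial running time per step.} The regret vector has entries at most \((t+1)\payoffdenom\), so its bit length is \(O(\log t+\log\payoffdenom)\) per coordinate. The old weights have bit length \(O(\log D^{(t)})=\poly(d,\log\payoffdenom,\log t)\). The support profiles are pure profiles, each polynomially encoded. Evaluating the vectors \(\boldsymbol v(\boldsymbol a')\) takes polynomially many pure-profile calls to the payoff oracle; the polynomial expectation property covers pure profiles. The compression LP has \(d+2\) variables, so a vertex can be found in time polynomial in this bit length. Everything is polynomial in the input length \(L\) and \(\log t\).

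The bounds do not blow up across iterations because the denominator bound \(D^{(t)}\) from Cramer's rule depends only on the current \(\boldsymbol r^{(t)}\) and \(t\), not on the previous weights. Consequently the encoding length is bounded uniformly in terms of \(\log t\) rather than growing step by step. This is the delicate point, and the induction should state it explicitly.
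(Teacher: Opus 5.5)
Your proposal is correct and follows the paper's proof: induction from the singleton base case, linearity of expectation showing $(t+1)\payoffdenom\E_{\widetilde\mu^{(t+1)}}[\cdot]=r^{(t+1)}_{i,a_i}$ before compression, and then \Cref{lem:rational-compression} to restore support at most $d+1$ with weights bounded by $D^{(t+1)}$. Your added remarks on per-step running time, and on why the weight bound depends only on $\boldsymbol r^{(t+1)}$ and $t$ (so bit lengths do not accumulate across steps), make explicit what the paper leaves implicit.
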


\begin{proof}
The base of the induction is when the underlying distribution is the singleton supported on \(\boldsymbol a^{\prime (1)}\). We will use the transition specified in~\Cref{prop:deterministic-zero-safe-trajectory}. Before applying the compression of~\Cref{lem:rational-compression}, for every \(i\) and \(a_i\in\mathcal A_i\), we have
\begin{align*}
 (t+1)\payoffdenom&\E_{\boldsymbol a'\sim\widetilde\mu^{(t+1)}}
 \bigl[u_i(a_i,\boldsymbol a'_{-i})-u_i(\boldsymbol a')\bigr]\\
 &\quad=t\payoffdenom\E_{\boldsymbol a'\sim\mu^{(t)}}
 \bigl[u_i(a_i,\boldsymbol a'_{-i})-u_i(\boldsymbol a')\bigr]
 +\payoffdenom\bigl[u_i(a_i,\boldsymbol a_{-i}^{\prime (t+1)})-u_i(\boldsymbol a^{\prime (t+1)})\bigr] = r_{i,a_i}^{(t+1)}.
\end{align*}
Applying the compression of~\Cref{lem:rational-compression} preserves these expectations, reduces the support of the distribution to at most \(d+1\), and the mixing weights over the support satisfy the claimed bound due to~\Cref{lem:rational-compression}.
\end{proof}

Now, let \(T=\lceil d/\epsilon^2\rceil\). Although \(T\) may be exponential, the counter \(t\), the integer vector \(\boldsymbol r^{(t)}\), and the compressed distribution have encodings polynomial in \(\log T\).

A \emph{valid state} is the initial state \(s^{(0)}=(0,\boldsymbol0,\varnothing)\), or a tuple \(s=(t,\boldsymbol r,\mu)\) with \(1\le t\le T\) satisfying the following properties.

\begin{enumerate}[label=\textnormal{(\arabic*)},leftmargin=2.8em]
 \item \(\boldsymbol r\in\Z^d\) and \(|r_{i,a_i}|\le t\payoffdenom\);
 \item \(\mu\) is an explicit distribution on at most \(d+1\) profiles, whose positive weights have numerators and denominators bounded by \(D^{(t)}\) (per~\Cref{lem:rational-compression});
 \item \(t\payoffdenom\E_{\boldsymbol a'\sim\mu}[u_i(a_i,\boldsymbol a'_{-i})-u_i(\boldsymbol a')]=r_{i,a_i}\) for every \(i\) and \(a_i\in\mathcal A_i\); and
 \item \(\max_{a_i\in\mathcal A_i} r_{i,a_i}\ge0\) for every player and \(\Phi(\boldsymbol r)\le d\payoffdenom^2t\).
\end{enumerate}
These conditions can be ascertained in polynomial time, and every valid state has encoding length polynomial in the game encoding and \(\log(1/\epsilon)\).

For the initial successor, we set \(\boldsymbol a^{\prime (1)}=\boldsymbol a'(\boldsymbol0)\), $r_{i,a_i}^{(1)}=\payoffdenom [u_i(a_i,\boldsymbol a_{-i}^{\prime (1)})-u_i(\boldsymbol a^{\prime (1)}) ]$ for every $i$ and $a_i \in \mathcal{A}_i$, and $\mu^{(1)}$ is the distribution supported solely on $\boldsymbol a^{\prime (1)}$. Then, $\operatorname{Next}(s^{(0)})
 = (1,\boldsymbol r^{(1)}, \mu^{(1)})$. For a valid state with \(1\le t<T\), we define \(\operatorname{Next}(s)\) by one regret matching step followed by the deterministic action selection of~\Cref{lem:zero-drift-selection} and the compression of~\Cref{lem:rational-compression}; when $t = T$, we set \(\operatorname{Next}(s)=s\) when \(t=T\). By~\cref{lem:zero-safe-step,lem:rational-compression}, this transition step takes polynomial time, and maps valid states to valid states. Moreover, the underlying potential for the $\PLS$ argument is taken as \(\operatorname{Val}(s)=t+1\), with \(\operatorname{Val}(s^{(0)})=1\). This trivially guarantees that every nonterminal state strictly improves the potential. We are now ready to prove~\Cref{thm:ivfcce-binary}.

\begin{proof}[Proof of~\Cref{thm:ivfcce-binary}]
We have defined a state space wherein each state has polynomial encoding length. We also provided an initial state, a polynomial-time successor algorithm, and an integer potential that strictly improves along successive states, unless the counter is equal to $T$.

For any valid \((t,\boldsymbol r,\mu)\), consistency gives
\(\barreg_i(\mu)=\max_{a_i\in\mathcal A_i} r_{i,a_i}/(\payoffdenom t)\). The invariance we have established in~\Cref{prop:compressed-trajectory} implies
\[
 0\le\barreg_i(\mu)
 =\frac{\max_{a_i\in\mathcal A_i} r_{i,a_i}}{\payoffdenom t}
 \le\sqrt{d/t}.
\]
At a local optimum \(t=T\), so the right-hand side is at most \(\epsilon\) by our choice of $T$. Thus, every local optimum decodes to an \((\epsilon, 0)\)-IVFCCE. We have therefore defined a polynomial local-search instance in the sense of \citet{Johnson88:How}, establishing membership in $\PLS$. Finally, the theorem follows from the fact that \(\PLS\cap\PPAD=\CLS\)~\citep{Fearnley23:Complexity}.
\end{proof}

\subsubsection{Hardness at exponential precision}
\label{subsec:ivfcce-hardness}

Complementing the previous membership result, we now show that the problem is unlikely to admit a polynomial-time algorithm. In particular, we show that it is at least as hard as the \emph{\(P\)-matrix linear complementarity problem}, whose complexity is a notorious open problem.

\begin{theorem}
    \label{theorem:PLCP}
    There is a polynomial-time reduction from P-matrix LCP to succinct normal-form games with a polynomial number of players and two actions per player. Moreover, the constructed game has the polynomial expectation property.
\end{theorem}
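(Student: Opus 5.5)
The plan is to use the classical correspondence between complementary bases of an LCP and pure profiles of a two-action game, with one player per coordinate. Given a P-matrix $\mathbf M\in\Q^{N\times N}$ and $\boldsymbol q\in\Q^N$, we seek $\boldsymbol z\ge 0$ with $\boldsymbol w=\mathbf M\boldsymbol z+\boldsymbol q\ge0$ and $\boldsymbol z^\top\boldsymbol w=0$. I would first rescale so that all entries are small rationals, using the standard fact that a P-LCP has a unique solution of polynomial bit complexity. I would then build a game with $N$ players, where player $i$ has actions $\{\mathsf 0,\mathsf 1\}$ and $a_i=\mathsf 1$ means ``$z_i$ is basic'', i.e., $w_i=0$.

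The payoff gap $g_i(\boldsymbol a_{-i})\defeq u_i(\mathsf 1,\boldsymbol a_{-i})-u_i(\mathsf 0,\boldsymbol a_{-i})$ will be affine in the indicators $a_j$. Concretely, $g_i(\boldsymbol a_{-i}) = c-\sum_{j\ne i}\tilde M_{ij}a_j - \tilde q_i$ for a suitable scaled and shifted copy of the data. I may also add a few auxiliary players (for instance, scaled copies that encode the magnitudes $z_j$ rather than just the support) so that the marginals $\boldsymbol x$ can be decoded into a candidate $\boldsymbol z$. Since every payoff is a sum of pairwise terms, the game is polymatrix-like. This gives the polynomial expectation property immediately and keeps two actions per player, as the statement requires.

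The key step uses the IVFCCE conditions for two actions. Let $\bar g_i=\E_\mu[g_i]$. Then $\barreg_i(\mu)=[\bar g_i]_+-\E_\mu[\ind{a_i=\mathsf1}g_i]$, and $|\barreg_i(\mu)|\le\epsilon$ means
\[
\E_\mu\bigl[\ind{a_i=\mathsf1}g_i(\boldsymbol a_{-i})\bigr]\approx[\bar g_i]_+ .
\]
This holds up to $\epsilon$, which is where information-value-freeness is used. Since $g_i$ is affine, the left-hand side involves second moments $\E[a_ia_j]$ rather than products of marginals, and this is the crux. My approach is to define a candidate $\bar{\boldsymbol z}$ (and $\bar{\boldsymbol w}=\mathbf M\bar{\boldsymbol z}+\boldsymbol q$) as an average over $\mu$ of the basic solutions of the profiles, then show the following. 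Each profile $\boldsymbol a$ determines a basis, and the exact $\epsilon=0$ equalities, summed with weights, force the error vector $\boldsymbol v=\bar{\boldsymbol z}-\boldsymbol z^\star$ to satisfy $v_i(\mathbf M\boldsymbol v)_i\le O(\epsilon)$ for every $i$. The P-matrix property is exactly the sign-reversal characterization: $v_i(\mathbf M\boldsymbol v)_i\le0$ for all $i$ implies $\boldsymbol v=0$. Its quantitative version, $\max_i v_i(\mathbf M\boldsymbol v)_i\ge \beta\|\boldsymbol v\|^2$ with $\beta>0$ of polynomial bit length, then gives $\|\boldsymbol v\|\le O(\sqrt{\epsilon/\beta})$. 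This is the ``consistent sign'' averaging from the overview. Without the P-property, contributions of different bases in the support of $\mu$ could cancel and no solution would be recoverable.

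Finally, taking $\epsilon=2^{-\poly(L)}$, I would round: guess the support $\{i:\bar z_i>\text{threshold}\}$, solve the corresponding principal linear system exactly, and verify. Uniqueness and bounded bit complexity of the P-LCP solution make this correct once $\|\boldsymbol v\|$ is below half the minimum nonzero gap. I expect the main obstacle to be the middle step. The payoff design must make the correlated second-moment terms $\E[a_ia_j]$ appear only in combinations that telescope into $v_i(\mathbf M\boldsymbol v)_i$ with a uniform sign. If the plain affine design fails, the first thing I would try is to symmetrize the interaction, using the positive-definite-like part supplied by the P-property on the relevant principal submatrix. Failing that, I would try adding auxiliary ``magnitude'' players whose quadratic-score payoffs force their own actions to be nearly deterministic, as in \Cref{lem:quadratic-no-information}. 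That would make the relevant moments factor.
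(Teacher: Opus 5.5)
There is a genuine gap, and it sits where you expected: the middle step is asserted rather than derived, and with the affine (polymatrix-like) payoffs you propose it cannot go through as stated. Your IVFCCE constraints $\E_\mu[\ind{a_i=\mathsf 1}g_i]\approx\pos{\bar g_i}$ see $\mu$ only through its first moments $\E[a_j]$ and pairwise moments $\E[a_ia_j]$. Your candidate $\bar{\boldsymbol z}$ is an average of the basic solutions $\boldsymbol z(S)=\mathbf K(S)^{-1}\boldsymbol q$ of the profiles. That is the expectation of a rational function of the profile, so it depends on higher-order moments of $\mu$. Nothing in the game links the two. The affine gap at a profile $S$ is an affine function of the indicator vector $\one_S$, not of the basic solution of the basis $S$. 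So nothing ties your constraints to $\boldsymbol v=\bar{\boldsymbol z}-\boldsymbol z^\star$, and the inequality $v_i(\mathbf M\boldsymbol v)_i\le O(\epsilon)$ is unsupported. Your fallbacks do not rescue this. In particular, the quadratic-score device of \Cref{lem:quadratic-no-information} gets its calibration from the conditional (swap-regret) quantity $\epsilon_{\mathrm{CE}}$. A CCE does not control that quantity, so the device does not force near-determinism for IVFCCEs. It would also need more than two actions per player.

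The missing idea is to make the deviation benefit at every pure profile exactly a coordinate of the scaled basic solution of that profile's complementary basis. The paper sets $u_i(\mathsf 0,\cdot)=\tfrac12$ and $u_i(\mathsf 1,\boldsymbol a_{-i})=\tfrac12-\det\mathbf D_i(S(\boldsymbol a_{-i}))/(2H)$. By Cramer's rule, the gains from deviating to $\mathsf 0$ and to $\mathsf 1$ at profile $S$ are $-\widetilde z_i(S)/(2H)$ and $-\widetilde w_i(S)/(2H)$. Here $\widetilde{\boldsymbol z}(S)=\delta(S)\boldsymbol z(S)$ satisfies the linear identity $\widetilde{\boldsymbol w}(S)=\mathbf M\widetilde{\boldsymbol z}(S)+\delta(S)\boldsymbol q$. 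Because the identity is linear, it survives averaging over an arbitrary correlated $\mu$, giving $\bar{\boldsymbol w}=\mathbf M\bar{\boldsymbol z}+\bar\delta\boldsymbol q$. \Cref{lem:determinant-regret} then gives $\barreg_i(\mu)=-\min\{\bar z_i,\bar w_i\}/(2H)$. The CCE condition yields $\bar{\boldsymbol z},\bar{\boldsymbol w}\ge0$, and information-value-freeness yields complementarity. No second moments and no quantitative sign-reversal constant are involved. The $P$-matrix property is used only through $\delta(S)>0$, so that $\bar\delta>0$ and $\bar{\boldsymbol z}/\bar\delta$ is a solution; this is the ``consistent sign'' from the overview. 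The polynomial expectation property comes from multilinearity of the determinant ($\E\det=\det\E$ for independent columns), not from pairwise structure. The payoff gap is multilinear of degree up to $d-1$ in the indicators, which an affine design cannot express. Your final rounding step is essentially the paper's \Cref{lem:lcp-rounding}.
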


In particular, an algorithm polynomial in the description of the game and \(\log(1/\epsilon)\) would imply a polynomial-time algorithm for simple stochastic games. Indeed, \citet{Gartner05:Simple} reduce simple stochastic games to rational \(P\)-matrix LCPs of polynomial size. We proceed to lay out the reduction, starting with some basic definitions.

\paragraph{From complementarity to IVFCCEs.}

In what follows, we let \(d\) denote the dimension of the problem. For \(\mathbf M\in\R^{d\times d}\) and \(\boldsymbol q\in\R^d\), the \emph{linear complementarity problem} \(\operatorname{LCP}(\mathbf M,\boldsymbol q)\) asks for vectors $\boldsymbol z$ and $\boldsymbol w\in\R^d$ such that
\begin{equation}
 \boldsymbol w=\mathbf M\boldsymbol z+\boldsymbol q, \quad \boldsymbol z,\boldsymbol w \ge0,
 \quad z_iw_i=0 \text{ for all } i \in [d].
 \label{eq:lcp}
\end{equation}
This latter condition is known as \emph{complementarity} and is what makes this class of problems intractable. $\mathbf M$ is said to be a \emph{\(P\)-matrix} if all of its principal minors are positive definite; equivalently, \(\operatorname{LCP}(\mathbf M,\boldsymbol q)\) has a unique solution for every \(\boldsymbol q\)~\citep{Cottle09:Linear}. We will take \(\mathbf M\) and \(\boldsymbol q\) to be integral, which can be assumed without loss of generality because of rescaling.

For a subset \(S\subseteq[d]\), we define the \emph{complementary basis matrix} \(\mathbf K(S) \in \mathbb{R}^{d \times d}\) whose column is \(-\mathbf M_{\cdot j}\) when \(j\in S\) and the unit column \(\boldsymbol e_j\) otherwise. Let \(\delta(S) \defeq \det\mathbf M_{S,S}\). By convention, \(\delta(\varnothing)=1\). Since \(\mathbf M\) is a \(P\)-matrix, \(\delta(S) >0 \) and \(\det\mathbf K(S) =(-1)^{|S|}\delta(S) \).

We will think of $S$ as a candidate basis where $w_i = 0$ for $i \in S$ and $x_i = 0$ for $i \notin S$. For \(\boldsymbol x(S)=\mathbf K(S)^{-1}\boldsymbol q\), we set \(z_i(S) = x_i(S)\) and \(w_i(S)=0\) for \(i \in S\), while \(z_i(S)=0\) and \(w_i(S)=x_i(S)\) for \(i \notin S\). Simple algebra shows that \(\boldsymbol w(S)=\mathbf M\boldsymbol z(S)+\boldsymbol q\). Indeed, $\mathbf{K}(S) \boldsymbol{x}(S) = - \sum_{i \in S} \mathbf{M}_{\cdot i} x_i(S) + \sum_{i \notin S} \boldsymbol{e}_i x_i(S)$. Moreover, $ \sum_{i \in S} \mathbf{M}_{\cdot i} x_i(S) = \mathbf{M} \boldsymbol{z}(S)$ and $\boldsymbol{w}(S) = \sum_{i \notin S} \boldsymbol{e}_i x_i(S)$.

Now, for \(i \notin S \), we define \(\mathbf D_i(S)\) to have column \(\boldsymbol q\) in position \(i\), column \(\mathbf M_{\cdot j}\) when \(j\in S \), and unit column \(\boldsymbol e_j\) otherwise.
Cramer's rule gives $x_i(S) = \det \mathbf{D}_i(S)/\delta(S)$. As a result, $w_i(S) = \det \mathbf{D}_i(S)/\delta(S)$.  Moreover, let us consider the set $S \cup \{i \}$. Since $i \in S \cup \{i \}$, we have $z_i(S \cup \{i \}) = x_i(S \cup \{i \})$. Using Cramer's rule again, we find
\[
    x_i(S \cup \{i\}) = \frac{ (-1)^{|S|} \det \mathbf{D}_i(S \cup \{i \}) }{ (-1)^{|S| + 1} \delta(S \cup \{i \})} = - \frac{ \det \mathbf{D}_i(S) }{ \delta(S \cup \{i\}) }.
\]
In particular, let us consider the scaled vectors \(\widetilde{\boldsymbol z}(S)=\delta(S)\boldsymbol z(S)\) and \(\widetilde{\boldsymbol w}(S)=\delta(S)\boldsymbol w(S)\). We have shown that, for $i \notin S$,
\begin{equation}
 \widetilde w_i(S) = \det\mathbf D_i(S)
 \text{ and }
 \widetilde z_i(S \cup\{i\})=-\det\mathbf D_i(S).
 \label{eq:edge-determinant}
\end{equation}
Moreover, multiplying~\eqref{eq:lcp} by $\delta(S)$, it also follows that
\begin{equation}
 \widetilde{\boldsymbol w}(S)=\mathbf M\widetilde{\boldsymbol z}(S)+\delta(S)\boldsymbol q.
 \label{eq:scaled-lcp}
\end{equation}
We are now ready to define the game. For the purpose of normalization, we set $U \defeq \max\{2,\max_{i,j}|\mathbf M_{i j }|,\max_i|q_i|\}$ and $H \defeq (dU)^d$, so that Hadamard's inequality gives \(|\det\mathbf D_i(T)|\le H\) for any $S \subseteq [d]$. In this context, we define player \(i\)'s payoffs by
\begin{equation}
 u_i(\mathsf{0},\boldsymbol a_{-i})=\frac12
 \text{ and }
 u_i(\mathsf{1},\boldsymbol a_{-i})
 =\frac12 - \frac{\det\mathbf D_i(S(\boldsymbol a_{-i}))}{2H},
 \label{eq:determinant-payoff}
\end{equation}
where \(S(\boldsymbol a_{-i})=\{j\ne i:a_j = \mathsf{1}\}\). In particular, a pure profile of the game induces a subset \(S\subseteq[d]\), where \(i \in S\) means that player \(i\) plays action \(\mathsf{1}\). Because of rescaling, these payoffs lie in \([0,1]\).

In what follows, if $\mu$ is a distribution over subsets, we define \(\bar{\boldsymbol z}=\E_{S \sim \mu}[\widetilde{\boldsymbol z}(S)]\), \(\bar{\boldsymbol w} = \E_{S \sim \mu}[\widetilde{\boldsymbol w}(S)]\), and \(\bar{\delta} = \E_{S \sim \mu}[\delta(S)]\), so that averaging \eqref{eq:scaled-lcp} gives \(\bar{\boldsymbol w} = \mathbf M \bar{\boldsymbol z} + \bar{\delta} \boldsymbol q\). The following lemma is the core of the reduction, linking the deviation benefit for each player's action to the initial linear complementarity problem.

\begin{lemma}[Deviation benefit]
\label{lem:determinant-regret}
We have
\begin{equation}
 \E_{\boldsymbol a\sim\mu}[u_i(\mathsf{0},\boldsymbol a_{-i})-u_i(\boldsymbol a)]
 =-\frac{\bar z_i}{2H}
 \text{ and }
 \E_{\boldsymbol a\sim\mu}[u_i(\mathsf{1},\boldsymbol a_{-i})-u_i(\boldsymbol a)]
 =-\frac{\bar w_i}{2H}.
 \label{eq:determinant-regret}
\end{equation}
As a result, \(\barreg_i(\mu)=-\min\{\bar z_i,\bar w_i\}/(2H)\).
\end{lemma}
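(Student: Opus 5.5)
Since both identities in \eqref{eq:determinant-regret} are linear in $\mu$, we prove them pointwise for each pure profile $\boldsymbol a$ and then take expectations. Fix $i$ and a profile $\boldsymbol a$, and let $S=S(\boldsymbol a)=\{j: a_j=\mathsf{1}\}$ be the induced subset, so that $S(\boldsymbol a_{-i})=S\setminus\{i\}$. Write $T\defeq S\setminus\{i\}$; then $i\notin T$, which is exactly the setting of \eqref{eq:edge-determinant}. The quantity $\det\mathbf D_i(T)$ drives both payoffs, and the plan is to express it through either $\widetilde w_i(S)$ or $\widetilde z_i(S)$, depending on whether $i\in S$.

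\textbf{Case $a_i=\mathsf{0}$.} Here $S=T$ and $i\notin S$, so $\widetilde z_i(S)=0$. By \eqref{eq:edge-determinant}, $\widetilde w_i(S)=\det\mathbf D_i(T)$. Deviating to $\mathsf{0}$ yields zero gain, which equals $-\widetilde z_i(S)/(2H)$. Deviating to $\mathsf{1}$ yields $-\det\mathbf D_i(T)/(2H)=-\widetilde w_i(S)/(2H)$.

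\textbf{Case $a_i=\mathsf{1}$.} Here $S=T\cup\{i\}$ with $i\in S$, so $\widetilde w_i(S)=0$. By the second identity in \eqref{eq:edge-determinant}, $\widetilde z_i(S)=-\det\mathbf D_i(T)$. Deviating to $\mathsf{1}$ yields zero gain, which equals $-\widetilde w_i(S)/(2H)$. Deviating to $\mathsf{0}$ yields $\tfrac12-\bigl(\tfrac12-\det\mathbf D_i(T)/(2H)\bigr)=\det\mathbf D_i(T)/(2H)=-\widetilde z_i(S)/(2H)$.

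In both cases the pointwise gains are $-\widetilde z_i(S)/(2H)$ for $\mathsf{0}$ and $-\widetilde w_i(S)/(2H)$ for $\mathsf{1}$. Taking expectations over $\boldsymbol a\sim\mu$ and applying linearity gives \eqref{eq:determinant-regret}.

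For the final claim, player $i$ has two actions, so
\[
\barreg_i(\mu)=\max\Bigl\{-\frac{\bar z_i}{2H},-\frac{\bar w_i}{2H}\Bigr\}=-\frac{\min\{\bar z_i,\bar w_i\}}{2H}.
\]

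The main obstacle is bookkeeping: confirming that \eqref{eq:edge-determinant} is applied with base set $T=S\setminus\{i\}$, which never contains $i$, so that both the $w$-formula and the $z$-formula for $T\cup\{i\}$ are valid. It also requires checking the sign convention $x_i(T\cup\{i\})=-\det\mathbf D_i(T)/\delta(T\cup\{i\})$, which gives $\widetilde z_i=-\det\mathbf D_i(T)$. That step is already established in the text preceding \eqref{eq:edge-determinant}, so no new computation is needed.
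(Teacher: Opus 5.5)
Your proof is correct and follows essentially the same route as the paper: a pointwise case analysis on whether $i\in S$, applying \eqref{eq:edge-determinant} with base set $S\setminus\{i\}$, then averaging over $\mu$ and taking the maximum over the two actions. The only difference is organizational: you split on the recommended action $a_i$, while the paper splits on the deviation target. This does not affect the argument.
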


\begin{proof}
We consider a fixed set $S \subseteq [d]$. Suppose first that \(i\in S\), meaning that player \(i\) is currently playing \(\mathsf{1}\). Deviating instead to \(\mathsf{0}\) changes the payoff by
 \[
 \frac{\det\mathbf D_i(S\setminus\{i\})}{2H}
 =-\frac{\widetilde z_i(S)}{2H},
\]
where we used~\eqref{eq:edge-determinant}. If \(i\notin S\), deviating to \(\mathsf{0}\) does not result in a payoff change and \(\widetilde z_i(S)=0\). Averaging over \(S\sim\mu\) proves the first claim.

The second claim is symmetric. When \(i\notin S\), deviating to \(\mathsf{1}\) changes the payoff by
\(-\det\mathbf D_i(S)/(2H)=-\widetilde w_i(S)/(2H)\). When \(i\in S\), both the payoff change and \(\widetilde w_i(S)\) are zero, so averaging over $S \sim \mu$ also proves the second claim. Finally, the claim about the external regret follows because
\[
    \barreg_i(\mu) = \max \left\{ \E_{\boldsymbol a\sim\mu}[u_i(\mathsf{0},\boldsymbol a_{-i})-u_i(\boldsymbol a)], \E_{\boldsymbol a\sim\mu}[u_i(\mathsf{1},\boldsymbol a_{-i})-u_i(\boldsymbol a)] \right\}. \qedhere
\]
\end{proof}

The lemma basically establishes the reduction: the CCE condition requires \(\bar z_i, \bar w_i \ge0\), whereas information-value-freeness requires \(\min\{\bar z_i, \bar w_i \}\le 0 \implies \bar z_i \bar w_i = 0 \); this establishes complementarity. As a result, if \(\mu\) is an exact \(\IVFCCE\), then \(\boldsymbol z=\bar{\boldsymbol z}/ \bar{\delta} \) and \(\boldsymbol w=\bar{\boldsymbol w}/\bar{\delta} \) constitute a solution to~\eqref{eq:lcp}, because \(\bar{\delta} > 0\) is an average of positive principal minors.

Moreover, to argue about the case where $\mu$ is not an exact IVFCCE, we claim that one can compute a rational \(\epsilon_*>0\), with polynomial encoding length, such that an exact solution of the original problem can still be recovered in polynomial time from any polynomial-support \(\epsilon_*\)-\(\IVFCCE\).

\begin{restatable}[Rounding]{lemma}{rounding}
\label{lem:lcp-rounding}
From any polynomial-support distribution \(\mu\) in the determinant game satisfying \(|\barreg_i(\mu)|\le\epsilon_*\) for all \(i\), one can recover the exact solution of \(\operatorname{LCP}(\mathbf M,\boldsymbol q)\) in polynomial time.
\end{restatable}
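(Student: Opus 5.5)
The plan is to use \Cref{lem:determinant-regret} to turn the approximate conditions into approximate complementarity for the averaged vectors, and then recover the exact solution by guessing its support. The averaged vectors satisfy \(\bar{\boldsymbol w}=\mathbf M\bar{\boldsymbol z}+\bar\delta\boldsymbol q\) exactly. The condition \(|\barreg_i(\mu)|\le\epsilon_*\) gives \(|\min\{\bar z_i,\bar w_i\}|\le 2H\epsilon_*\) for every \(i\). Hence \(\bar z_i,\bar w_i\ge -2H\epsilon_*\), and at least one of them is at most \(2H\epsilon_*\). Dividing by \(\bar\delta\ge 1\) yields \((\boldsymbol z,\boldsymbol w)=(\bar{\boldsymbol z},\bar{\boldsymbol w})/\bar\delta\). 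This pair satisfies \(\boldsymbol w=\mathbf M\boldsymbol z+\boldsymbol q\) exactly, and it is an \(\eta\)-approximate complementary solution with \(\eta=2H\epsilon_*\). Here \(\bar\delta\ge1\) holds because every \(\delta(S)\) is a positive integer, since \(\mathbf M\) is integral.

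Because \(\mu\) has polynomial support, \(\bar{\boldsymbol z}\), \(\bar{\boldsymbol w}\) and \(\bar\delta\) are computed exactly in polynomial time from \eqref{eq:edge-determinant}. The guess for the support of the true solution is
\[
\widehat S=\{i:\ z_i> w_i\}
\]
(ties broken toward \(w\)). For \(i\notin\widehat S\) we have \(z_i\le w_i\), so \(z_i=\min\{z_i,w_i\}\in[-\eta,\eta]\). Symmetrically, \(w_i\) is tiny for \(i\in\widehat S\).

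Next we solve the basis system \(\boldsymbol x(\widehat S)=\mathbf K(\widehat S)^{-1}\boldsymbol q\). The aim is to show that it equals the unique LCP solution \((\boldsymbol z^*,\boldsymbol w^*)\), which can then be checked in polynomial time. The argument compares \((\boldsymbol z,\boldsymbol w)\) with \((\boldsymbol z(\widehat S),\boldsymbol w(\widehat S))\). Both satisfy \(\boldsymbol w=\mathbf M\boldsymbol z+\boldsymbol q\), and \((\boldsymbol z,\boldsymbol w)\) differs from a vector with the complementary pattern of \(\widehat S\) by at most \(\eta\) in each entry. Therefore \(\|\boldsymbol x(\widehat S)-\text{(the nonbasic-zeroed }(\boldsymbol z,\boldsymbol w))\|_\infty\le \|\mathbf K(\widehat S)^{-1}\|\,\eta\cdot\poly(d)\), which is at most \(\eta\,dH\cdot\poly(d)\) by Cramer's rule, since \(|\det\mathbf K(\widehat S)|\ge1\). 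The entries of the zeroed vector are at least \(-\eta\), so every coordinate of \(\boldsymbol x(\widehat S)\) is at least \(-\eta\,\poly(d)H\).

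The coordinates of \(\boldsymbol x(\widehat S)\) are rationals with denominator \(\delta(\widehat S)\le H\). So if \(\epsilon_*\) is chosen as \(1/(\poly(d)H^{3})\), the bound \(\eta\,\poly(d)H<1/H\) holds. It forces every coordinate to be nonnegative, which makes \(\widehat S\) a feasible complementary basis and hence, by uniqueness for \(P\)-matrices, the exact solution. The encoding length of this \(\epsilon_*\) is polynomial because \(\log H=d\log(dU)\).

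The main obstacle is that \(\widehat S\) need not be the true support when the true solution is degenerate, meaning \(z^*_i=w^*_i=0\). Even then, feasibility of \(\boldsymbol x(\widehat S)\) is all we need. The perturbation argument above only uses invertibility of \(\mathbf K(\widehat S)\), the determinant lower bound, and the near-complementary sign pattern, so it is robust to degeneracy.

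I would double-check the norm bound on \(\mathbf K(\widehat S)^{-1}\) with care. Its entries are cofactors bounded by \(H\) divided by \(\delta(\widehat S)\ge1\), which gives an \(O(dH)\) bound; if this fails I would simply shrink \(\epsilon_*\) by a further factor of \(H\).
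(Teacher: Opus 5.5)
Your proposal is correct and follows the paper's proof essentially step for step: you normalize by $\bar\delta\ge1$ to get an exact solution of $\boldsymbol w=\mathbf M\boldsymbol z+\boldsymbol q$ that is $\eta$-approximately complementary with $\eta=2H\epsilon_*$, take the basis given by the smaller coordinate of each pair, bound the perturbation of the basic solution via Cramer's rule, and use that any negative coordinate of $\mathbf K(\widehat S)^{-1}\boldsymbol q$ is at most $-1/H$. The one slip is that the perturbation $\mathbf K(\widehat S)^{-1}\mathbf L(\widehat S)\boldsymbol y$ picks up a factor $U$ from the nonbasic columns $-\mathbf M_{\cdot i}$ (the paper's bound is $d^2UH\eta$), so $\epsilon_*=1/(\poly(d)H^3)$ must be shrunk by a further factor $U\le H$, which is exactly the fallback you anticipated (the paper takes $\epsilon_*=1/(4H^2(d^2UH+1))$).
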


This rounding argument is standard, and is deferred to \Cref{app:lcp-rounding}. To complete the proof of~\Cref{theorem:PLCP}, we need to argue about the polynomial expectation property. This essentially follows because for independent random columns \(\boldsymbol v_1,\ldots,\boldsymbol v_d\), we have
\begin{equation}
 \E\det(\boldsymbol v_1,\ldots,\boldsymbol v_d)
 =\det(\E\boldsymbol v_1,\ldots,\E\boldsymbol v_d),
 \label{eq:expected-determinant}
\end{equation}
by virtue of multilinearity. In particular, under a (rational) mixed-strategy profile \(\boldsymbol x\), let \(x_j\) denote the probability that player \(j\) chooses \(\mathsf{1}\). \(\overline{\mathbf D}_i(\boldsymbol x)\) is defined to have column \(\boldsymbol q\) in position \(i\) and column \((1-x_j)\boldsymbol e_j+x_j\mathbf M_{\cdot j}\) in position \(j\ne i\). Independence and \eqref{eq:expected-determinant} give
\begin{equation*}
 \E_{\boldsymbol a\sim\boldsymbol x}[u_i(\boldsymbol a)]
 =\frac12-\frac{x_i}{2H}\det\overline{\mathbf D}_i(\boldsymbol x).
\end{equation*}
This determinant can be computed in polynomial time. In particular, this implies that there is polynomial-time algorithm for computing an exact CE~\citep{Papadimitriou08:Computing,Jiang15:Polynomial}, and hence a CCE. Finally, we highlight again that, by virtue of the reduction of~\citet{Gartner05:Simple}, \Cref{theorem:PLCP} implies that simple stochastic games reduce to IVFCCEs.

\section*{Acknowledgments}

We are grateful to Kshipra Bhawalkar, Christopher Liaw, Grigoris Velegkas, and Aranyak Mehta for several helpful discussions. Part of this work was done when the authors were interns at Google Research. I.A. thanks Gabriele Farina for pointing out (in a 2022 discussion) that regret matching has nonnegative external regret. The proofs of this paper were developed with assistance from GPT 5.6 Sol. The authors take full responsibility for all content.

\bibliography{references}

\appendix

\section{Proof of the FTRL guarantee}
\label{app:fixed-step-ftrl}

For completeness, we provide the proof that FTRL has nonnegative external regret. We let $\mathcal{R}_i$ denote the regularizer of FTRL. By shifting, we can assume that $\min_{ \boldsymbol{x}_i \in \Delta(\mathcal{A}_i)} \mathcal{R}_i(\boldsymbol{x}_i) = 0$.

\ftrlnonnegative*

\begin{proof}
Let \(\boldsymbol u_i^{(t)}=(u_i(a_i,\boldsymbol x_{-i}^{(t)}))_{a_i \in\mathcal A_i}\). On round \(t\), player \(i\) selects the mixed strategy
\[
 \boldsymbol x_i^{(t)}
 \in\argmaxop_{\boldsymbol x_i\in\Delta(\mathcal A_i)}
 \left\{
 \sum_{\tau<t}\langle \boldsymbol x_i,
 \boldsymbol u_i^{(\tau)}\rangle
 -\frac{1}{\eta_i} \mathcal{R}_i(\boldsymbol x_i) 
 \right\}.
\]
We define
\[
 V_i^{(t)}
 \defeq \max_{\boldsymbol x_i\in\Delta(\mathcal A_i)}
 \left\{\sum_{\tau=1}^t
 \langle\boldsymbol x_i,\boldsymbol u_i^{(\tau)}\rangle
 - \frac{1}{\eta_i} \mathcal R_i(\boldsymbol x_i)\right\}.
\]
By definition of \(\boldsymbol x_i^{(t)}\), we have
\[
 V_i^{(t)}\ge V_i^{(t-1)}
 +\langle\boldsymbol x_i^{(t)},\boldsymbol u_i^{(t)}\rangle.
\]
Summing over \(t\) and using that $V_i^{(0)} = 0$ (since $\min_{ \boldsymbol{x}_i \in \Delta(\mathcal{A}_i)} \mathcal{R}_i(\boldsymbol{x}_i) = 0$), we obtain
\[
 \sum_{t=1}^T\langle\boldsymbol x_i^{(t)},\boldsymbol u_i^{(t)}\rangle
 \le V_i^{(T)}
 \le\max_{a_i \in\mathcal A_i}\sum_{t=1}^T u_{i,a_i}^{(t)}.
\]
This concludes the proof.
\end{proof}

\section{Rounding exponentially precise solutions}
\label{app:lcp-rounding}

Finally, we formalize the rounding procedure. We recall the claim below.

\rounding*

\begin{proof}
We set
\begin{equation}
 \Lambda=d^2UH
 \text{ and }
 \epsilon_*=\frac1{4H^2(\Lambda+1)}.
 \label{eq:hardness-accuracy}
\end{equation}
Both numbers have polynomial encoding length. We first convert the IVFCCE guarantee into an approximate LCP solution. We compute \(\bar{\boldsymbol z},\bar{\boldsymbol w}, \bar{\delta} \) exactly. Every principal minor is a positive integer, so \(\bar{\delta} \ge1\). By \cref{lem:determinant-regret}, the upper bound \(\barreg_i\le\epsilon_*\) implies \(\bar z_i,\bar w_i\ge-2H\epsilon_*\), while the lower bound \(\barreg_i\ge-\epsilon_*\) implies \(\min\{\bar z_i,\bar w_i\}\le2H\epsilon_*\). We set
\(\widehat{\boldsymbol z}=\bar{\boldsymbol z}/ \bar{\delta} \),
\(\widehat{\boldsymbol w}=\bar{\boldsymbol w}/ \bar{\delta} \), and
\(\eta=2H\epsilon_*\). We know that
\(\widehat{\boldsymbol w} = \mathbf M \widehat{\boldsymbol z}+\boldsymbol q\). Moreover, every coordinate is at least \(-\eta\), and the smaller coordinate in each pair \((\widehat z_i,\widehat w_i)\) is at most \(\eta\).

The next step is to extract a complementary basis. To do so, we set \(i\in S\) exactly when \(\widehat w_i \le \widehat z_i\), in which case \(\widehat w_i\) is nonbasic; otherwise \(\widehat z_i\) is nonbasic. Each selected nonbasic coordinate lies between \(-\eta\) and \(\eta\). If \(\boldsymbol x\) collects the basic coordinates and \(\boldsymbol y\) the nonbasic ones, then the LCP equation \(\widehat{\boldsymbol w} = \mathbf M \widehat{\boldsymbol z}+\boldsymbol q\) can be written as
\(\mathbf K(S) \boldsymbol x+\mathbf L(S) \boldsymbol y=\boldsymbol q\), where \(\mathbf L(S) \) is a matrix whose entries have magnitude at most \(U\).

To bound the error, we now compare with the exact basic solution \(\boldsymbol x(S) = \mathbf K(S)^{-1}\boldsymbol q\), which can be obtained by setting the nonbasic vector to zero. The error can be bounded as
\begin{equation}
    \label{eq:roundingerror}
    \|\boldsymbol x - \boldsymbol x(S) \|_\infty = \| \mathbf K(S)^{-1}\mathbf L(S) \boldsymbol{y} \|_\infty
 \le d^2UH\eta=\Lambda\eta.
\end{equation}
where we used the fact that \(\|\boldsymbol y\|_\infty\le\eta\) and every entry of $\mathbf{K}(S)^{-1}$ is upper bounded by $H$ (by Cramer's rule).

It remains to establish nonnegativity of $\boldsymbol{x}(S)$. For the sake of contradiction, suppose that some coordinate of $\boldsymbol{x}(S)$ were negative. By Cramer's rule, it can be expressed as the ratio of integers with denominator at most $H$, so it would be at most $-1/H$. However, the corresponding approximate basic coordinate is at least \(-\eta\). Combining with the error bound in~\eqref{eq:roundingerror}, this would imply $1/H \leq (\Lambda+1)\eta$, which contradicts the fact that $(\Lambda+1)\eta = 1/(2H)$ (by our choice of parameters in~\eqref{eq:hardness-accuracy}). We conclude that setting all nonbasic coordinates to zero gives an exact nonnegative complementary solution, which is the unique solution to \(\operatorname{LCP}(\mathbf M,\boldsymbol q)\).
\end{proof}
\end{document}